\documentclass[12pt,reqno]{amsart}
\usepackage{amsmath}
\usepackage{latexsym}
\usepackage{amsfonts}
\usepackage{amssymb}
\usepackage{color}
\usepackage{bbm,dsfont}
\usepackage{graphicx}
\usepackage[normalem]{ulem}
\usepackage{caption}
\usepackage{enumerate}
\usepackage{tensor}


\newtheorem{proposition}{Proposition}

\newtheorem{lemma}{Lemma}

\theoremstyle{definition}


\newcommand{\de}{\,{\rm d}}




\newcommand{\R}{\mathbb{R}} 
\newcommand{\C}{\mathbb{C}} 
\newcommand{\rational}{\mathbb Q} 
\newcommand{\real}{\mathbb R} 
\newcommand{\complex}{\mathbb C}

\newcommand{\half}{\tfrac{1}{2}} 
\newcommand{\mo}[1]{\left| #1 \right|} 



\newcommand{\hi}{\mathcal{H}} 
\newcommand{\hh}{\mathcal{H}} 
\renewcommand{\aa}{\mathcal{A}} 
\newcommand{\bb}{\mathcal{B}} 

\newcommand{\ip}[2]{\left\langle\,#1\,|\,#2\,\right\rangle} 

\newcommand{\kb}[2]{|#1\rangle\langle#2|} 
\newcommand{\no}[1]{\left\|#1\right\|} 
\newcommand{\tr}[1]{\mathrm{tr}\left[#1\right]} 

\newcommand{\id}{\mathbbm{1}} 

\newcommand{\lam}{\lambda}

\newcommand{\rank}{{\rm rank}} 

%











\newcommand{\prem}{\mathcal{P}} 
 

\newcommand{\e}{{\rm e}}

\renewcommand{\Re}{{\rm Re}\,}
\newcommand{\h}[1]{\mathcal{#1}}
\newcommand{\be}{\begin{equation}}
\newcommand{\ee}{\end{equation}}

\newcommand{ \cP}{\mathcal{P}}


\begin{document}\setlength{\arraycolsep}{2pt}

\title[]{How many orthonormal bases are needed to distinguish all pure quantum states?}

\begin{abstract}
We collect some recent results that together provide an almost complete answer to the question stated in the title. 
For the dimension $d=2$ the answer is three.
For the dimensions $d=3$ and $d\geq 5$ the answer is four.
For the dimension $d=4$ the answer is either three or four.
Curiously, the exact number in $d=4$ seems to be an open problem.
\end{abstract}

\author[]{Claudio Carmeli}

\address{\textbf{Claudio Carmeli}; DIME, Universit\`a di Genova, Via Magliotto 2, I-17100 Savona, Italy}
 \email{claudio.carmeli@gmail.com}

\author[]{Teiko Heinosaari}
\address{\textbf{Teiko Heinosaari}; Turku Centre for Quantum Physics, Department of Physics and Astronomy, University of Turku, Finland}
\email{teiko.heinosaari@utu.fi}

\author[]{Jussi Schultz}
\address{\textbf{Jussi Schultz}; Dipartimento di Matematica, Politecnico di Milano, Piazza Leonardo da Vinci 32, I-20133 Milano, Italy, and Turku Centre for Quantum Physics, Department of Physics and Astronomy, University of Turku, Finland}
\email{jussi.schultz@gmail.com}

\author[]{Alessandro Toigo}
\address{\textbf{Alessandro Toigo}; Dipartimento di Matematica, Politecnico di Milano, Piazza Leonardo da Vinci 32, I-20133 Milano, Italy, and I.N.F.N., Sezione di Milano, Via Celoria 16, I-20133 Milano, Italy}
\email{alessandro.toigo@polimi.it}

\maketitle

\section{Introduction}

How many different measurement settings are needed in order to uniquely
determine a pure quantum state, and how should such measurements be chosen?
This problem goes back to a famous remark by W.~Pauli \cite{Pauli33}, in which he raised the question whether or not the position and the momentum distributions are enough to define the wave function uniquely modulo a global phase.
The original \emph{Pauli problem} has a negative answer \cite{PFQM44}, but it has evolved into many interesting variants and has been studied from several fruitful perspectives.
Discussion of the vast literature lies outside the scope of this work. 
Instead, we will concentrate only on a specific form of the Pauli problem, which is concerned with the minimal number of orthonormal bases, or projective measurements, in a finite dimensional Hilbert space that is needed in order to distinguish all pure quantum states. 
We require that {\em all} pure states are determined, so schemes that allow the determination of merely almost all pure states are outside of the scope of this work, even if they are interesting and important from the practical point of view. 
The purpose of this paper is to present the essential results related to our question in a comprehensible way.

It is quite obvious that a single orthonormal basis cannot distinguish all pure states in a $d$-dimensional Hilbert space, while it is known that with $d+1$ bases it is possible to distinguish all states, pure or mixed.
The problem of finding the minimal number of orthonormal bases determining an unknown pure state has been raised several times in the past.
It is easy to verify that the minimal number is three in dimension $2$, but in higher dimensions the problem becomes more difficult.
In 1978 A.~Vogt reported on R.~Wright's conjecture that three orthonormal bases are sufficient to identify an unknown pure state in any finite dimension \cite{Vogt78}.
In 1983 B.Z.~Moroz made the same claim and presented a proof for it \cite{Moroz83}, but in the erratum he explains that his proof does not work for all pure states and credits M.~Gromov for pointing out an argument that shows that at least four bases are needed in large dimensions \cite{Moroz84}.
In 1994 this argument was spelled out in greater detail by Moroz and A.M.~Perellomov \cite{MoPe94}, and they concluded that at least four bases are needed for any dimension $d\geq 9$. 
We will see that this conclusion can be extended to all dimensions $d\geq 5$ as a direct implication of the result of \cite{HeMaWo13}. 
The sufficiency of four generic orthonormal bases for unique pure state determination was found by D.~Mondragon and V.~Voroninski \cite{MoVo13}, and a concrete method for constructing four bases with the desired property  was recently introduced by P.~Jaming \cite{Jaming14}.

This paper is organized as follows.
We start by giving a precise mathematical formulation of the question in Sec.~\ref{sec:question}.
In Sec.~\ref{sec:qubit} we give an elementary argument for the fact that two orthonormal bases cannot distinguish all pure states in any dimension $d$.
Then in Sec.~\ref{sec:qutrit} we review the rank criterion first found in \cite{HeMaWo13} 
and explain how this implies that three bases are not enough in dimension $3$.
In Sec.~\ref{sec:4bases} we give a thorough presentation of Jaming's construction of four bases.
We continue in Sec.~\ref{sec:lower_bound} with the most technical part of this paper, which reviews the argument presented in \cite{HeMaWo13} that implies the impossibility of three bases in dimensions $d\geq 5$.

All these results together mean that the minimal number of orthonormal bases that are able to distinguish all pure quantum states is:
\begin{itemize}
\item three for $d=2$ 
\item four for $d=3$ and all dimensions $d\geq 5$
\item either three or four in $d=4$. 
\end{itemize}

Curiously, the final answer in $d=4$ still remains open.
In Sec.~\ref{sec:localbas} we rule out specific types of triples of bases, namely, those consisting solely of product vectors with respect to a splitting of the Hilbert space into a tensor product of $2$-dimensional spaces. 
Finally, in Sec.~\ref{sec:spin1} we treat spin-$1$ measurements to highlight the fact that even if four bases can distinguish all pure quantum states, these bases must be chosen appropriately and it may happen that some natural choices are not the best ones. We end this paper with a brief discussion of the problem in an infinite dimensional Hilbert space in Sec.~\ref{sec:discussion}.

\section{Formulation of the question}\label{sec:question}

Let $\hi$ be a finite $d$-dimensional Hilbert space.
The quantum states are described by density matrices, i.e., positive operators $\varrho$ on $\hi$ that satisfy $\tr{\varrho}=1$.
A quantum state $\varrho$ is \emph{pure} if it cannot be written as a mixture $\varrho = \half \varrho_1 + \half \varrho_2$ of two different states $\varrho_1$ and $\varrho_2$. 
Pure quantum states correspond to $1$-dimensional projections.
They can be alternatively described as rays of vectors in $\hi$, meaning that two vectors $\psi_1$ and $\psi_2$ correspond to the same pure state if there is a nonzero complex number $c$ such that $\psi_1=c\psi_2$.
For a unit vector $\psi\in\hi$, the corresponding density matrix is $\varrho=\kb{\psi}{\psi}$.

Let $\{ \varphi_j \}_{j=1}^d$ be an orthonormal basis of $\hi$.
(From now on, the term \emph{basis} means an orthonormal basis.)
If we perform a measurement of this basis in a state $\varrho$, then we get an outcome $j$ with the probability $\ip{\varphi_j}{\varrho \varphi_j}$.
The probability distribution $p(j)=\ip{\varphi_j}{\varrho \varphi_j}$ encodes the information that this measurement gives us about the unknown state $\varrho$. 
It is quite clear that this information is not enough to determine the input state uniquely.
For instance, the pure states corresponding to the unit vectors $1/\sqrt{2} (\varphi_1 \pm \varphi_2)$ are different but they lead to the same probability distribution.
If our aim is to identify an unknown quantum state uniquely, we should thus measure more than one orthonormal basis.

Let $\mathcal{B}_1=\{ \varphi^1_j \}_{j=1}^d, \ldots, \mathcal{B}_m=\{ \varphi^m_j \}_{j=1}^d$ be $m$  orthonormal bases of $\hi$.
We say that the bases $\mathcal{B}_1,\ldots,\mathcal{B}_m$ distinguish two different states $\varrho_1$ and $\varrho_2$ if 
\begin{align}\label{eq:can}
\ip{\varphi^\ell_j}{\varrho_1\varphi^\ell_j} \neq \ip{\varphi^\ell_j}{\varrho_2\varphi^\ell_j}
\end{align}
for some $\ell=1,\ldots,m$ and $j=1,\ldots,d$.
This means that if we get the probability distributions related to all $m$ bases in the states $\varrho_1$ and $\varrho_2$, then the measurement data corresponding to these two states are different. 

We say that the orthonormal bases $\mathcal{B}_1,\ldots,\mathcal{B}_m$ \emph{distinguish all pure states} if they distinguish any pair of different pure states.
Following \cite{BuLa89}, we also say that in this case the bases $\mathcal{B}_1,\ldots,\mathcal{B}_m$ are \emph{informationally complete with respect to pure states}.

We recall that it is possible to find finite collections of orthonormal bases that can distinguish all pure states.
Namely, it is known that there exist $d+1$ orthonormal bases which distinguish \emph{all} states, pure or mixed; see e.g. \cite{KoPa13}  for a construction.
Less than $d+1$ bases cannot distinguish all states simply because not enough parameters are determined. 
However, the pure states form a non-convex subset of all states, and one cannot thus rule out the possibility that less than $d+1$ bases can distinguish all pure states.
This leads to our main question stated in the title:

\emph{How many orthonormal bases are needed in order to distinguish all pure states?}

One may be tempted to criticize this question on the grounds that in any real experiment the states are perhaps never completely pure. 
But as any problem of this type, also this should be considered as a question on the fundamental limits of quantum theory.
As such, we believe that it reveals an interesting and even surprising aspect of the duality of states and measurements.

\section{Qubit and insufficiency of two bases}\label{sec:qubit}

As a warm up, let us consider the case of a qubit, i.e., $d=2$.
It is well known and explained also in almost any textbook that one can choose three orthonormal bases such that the related measurement outcome distributions determine an unknown qubit state uniquely.
This is rather obvious if one looks at the Bloch representation $\varrho_{\vec{r}}= \frac{1}{2}(\id + \vec{r}\cdot\vec{\sigma})$ of qubit states, where $\vec{r}$ is a vector in $\real^3$ satisfying $\no{\vec{r}}\leq 1$ and $\vec{\sigma}=(\sigma_x,\sigma_y,\sigma_z)$ consists of Pauli matrices.
Since 
\begin{equation*}
\vec{r}= (\tr{\varrho \sigma_x},\tr{\varrho \sigma_y},\tr{\varrho \sigma_z}) \, , 
\end{equation*}
we conclude that measurements of the eigenbases of $\sigma_x$, $\sigma_y$ and $\sigma_z$ specify the vector $\vec{r}$ and hence also the state $\varrho_{\vec{r}}$. 
More generally, if we fix three linearly independent unit vectors $\vec{a}$, $\vec{b}$ and $\vec{c}$ in $\real^3$, then the measurements of the eigenbases of $\vec{a}\cdot\vec{\sigma}$, $\vec{b}\cdot\vec{\sigma}$ and $\vec{c}\cdot\vec{\sigma}$ distinguish all states. 

A qubit state $\varrho_{\vec{r}}$ is pure exactly when $\no{\vec{r}}=1$. 
The direction of a unit vector $\vec{r}$ depends on two parameters only, so one may wonder if two orthonormal bases can suffice to determine any pure qubit state.
This is not true, in fact, in any dimension:

\begin{proposition}\label{prop:2bases}
In any dimension $d\geq 2$, two orthonormal bases cannot distinguish all pure states.
\end{proposition}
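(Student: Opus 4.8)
The plan is to reduce the statement to the topological impossibility of mapping a compact manifold continuously and injectively into a Euclidean space of the same dimension. Fix the two bases $\mathcal{B}_1=\{\varphi_j\}_{j=1}^d$ and $\mathcal{B}_2=\{\psi_j\}_{j=1}^d$, and represent a pure state by a unit vector $\chi=\sum_j c_j\varphi_j$. Measuring $\mathcal{B}_1$ returns the numbers $|c_j|^2$, while measuring $\mathcal{B}_2$ returns $|\langle\psi_j|\chi\rangle|^2$. The idea is to \emph{freeze} the information coming from $\mathcal{B}_1$ by restricting attention to the vectors whose $\mathcal{B}_1$-moduli are all equal, and then to argue that $\mathcal{B}_2$ alone carries too few real parameters to separate the states that remain.

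Concretely, I would consider the family $\chi_{\boldsymbol\theta}=\tfrac{1}{\sqrt d}\sum_{j=1}^d e^{i\theta_j}\varphi_j$, with $\boldsymbol\theta=(\theta_1,\dots,\theta_d)$. Every such vector produces the same $\mathcal{B}_1$-distribution $p_j\equiv 1/d$, so these states are automatically indistinguishable by $\mathcal{B}_1$. Two parameters $\boldsymbol\theta,\boldsymbol\theta'$ define the same ray exactly when $\theta_j-\theta'_j$ is independent of $j$; hence the \emph{distinct} pure states in this family are parametrised by the quotient torus $\mathbb{T}^{d-1}$ obtained by discarding the global phase. The map $F\colon\mathbb{T}^{d-1}\to\mathbb{R}^d$, $F(\boldsymbol\theta)=\big(|\langle\psi_j|\chi_{\boldsymbol\theta}\rangle|^2\big)_{j=1}^d$, is continuous and genuinely well defined on the quotient, and its image lies in the affine hyperplane $H=\{q\in\mathbb{R}^d:\sum_j q_j=1\}$, a connected manifold of dimension $d-1$.

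The decisive observation is then purely topological. The domain $\mathbb{T}^{d-1}$ is a compact manifold of dimension $d-1$, matching the dimension of $H$. If $F$ were injective, invariance of domain (applied in charts) would force $F(\mathbb{T}^{d-1})$ to be open in $H$; being also compact it would be closed, and by connectedness of $H$ it would fill all of $H$ — impossible, since a compact set cannot equal the noncompact space $H$. Hence $F$ is not injective, and any $\boldsymbol\theta\neq\boldsymbol\theta'$ that are distinct in the quotient but satisfy $F(\boldsymbol\theta)=F(\boldsymbol\theta')$ yield two distinct pure states that $\mathcal{B}_1$ and $\mathcal{B}_2$ jointly fail to separate. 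For $d=2$ this specializes to the elementary fact that a continuous real function on a circle is never injective.

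The step that needs care is the dimension bookkeeping: one must quotient out the global phase so that the domain really has dimension $d-1$, and check that the target hyperplane has exactly the same dimension. This tight matching is the whole point, and it is precisely what fails once a third basis is adjoined (the target dimension jumps to $2(d-1)$), which is consistent with the fact, recalled later in the paper, that more bases can suffice.
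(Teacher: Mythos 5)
Your proof is correct, but it takes a genuinely different route from the paper's. The paper's argument is constructive and purely linear-algebraic: it fixes $\varphi_1$ from the first basis, chooses a unit vector $\eta\perp\varphi_1$ adapted to the second basis by an explicit formula (with a short case analysis on whether $\ip{\varphi_1}{\phi_1}$ or $\ip{\varphi_1}{\phi_2}$ vanishes), and exhibits the indistinguishable pair directly as $\psi_\pm=\tfrac{1}{\sqrt{2}}(\varphi_1\pm\eta)$. Your argument is topological and non-constructive: you freeze the $\mathcal{B}_1$-statistics on the uniform-modulus family, identify the distinct rays in it with the compact $(d-1)$-torus, observe via Parseval that the $\mathcal{B}_2$-statistics map this torus into the $(d-1)$-dimensional affine hyperplane $H$, and rule out injectivity by invariance of domain (an injective continuous image would be open, compact hence closed, so all of the noncompact connected $H$). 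All the steps check out, including the key bookkeeping that the quotient by the global phase has exactly dimension $d-1$. As for what each approach buys: your dimension count explains conceptually \emph{why} two bases must fail and why the obstruction evaporates once a third basis is added, and it is very much in the spirit of the embedding-dimension arguments the paper uses in Section 6 to rule out three bases for $d\geq 5$; the paper's construction, on the other hand, produces the offending pair of states explicitly and — as the paper points out in Section 8 — works verbatim in an infinite-dimensional Hilbert space, where your compactness and invariance-of-domain argument has no analogue.
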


\begin{proof}
Our proof of this statement has been motivated by Theorem 1 in \cite{Finkelstein04}.
Let $\mathcal{B}_1=\{ \varphi_j \}_{j=1}^d$ and $\mathcal{B}_2=\{ \phi_k \}_{k=1}^d$ be two orthonormal bases of a $d$-dimensional Hilbert space. 
We need to find two nonparallel unit vectors $\psi_+$ and $\psi_-$ such that
\begin{equation}\label{eq:same-for-pm}
\mo{\ip{\xi}{\psi_+}}^2 = \mo{\ip{\xi}{\psi_-}}^2
\end{equation}
for all vectors $\xi \in \mathcal{B}_1 \cup \mathcal{B}_2$.

Let $\eta\in\hi$ be a unit vector orthogonal to $\varphi_1$.
We set $\psi_\pm=\frac{1}{\sqrt{2}}(\varphi_1 \pm \eta)$.
Then
\begin{align*}
\mo{\ip{\psi_\pm}{\xi}}^2 = \half \left( \mo{\ip{\varphi_1}{\xi}}^2 + \mo{\ip{\eta}{\xi}}^2 \right)
\pm \Re \left(\ip{\varphi_1}{\xi}\ip{\xi}{\eta}\right)\, ,
\end{align*}
so that
\begin{align*}
\mo{\ip{\psi_+}{\xi}}^2 - \mo{\ip{\psi_-}{\xi}}^2 = 2\ \Re \left(\ip{\varphi_1}{\xi}\ip{\xi}{\eta}\right) \, .
\end{align*}
Since $\ip{\varphi_1}{\varphi_j}\ip{\varphi_j}{\eta}=0$ for all $\varphi_j\in\mathcal{B}_1$, the condition \eqref{eq:same-for-pm} holds for all $\xi \in \mathcal{B}_1 \cup \mathcal{B}_2$ if 
\begin{equation}\label{eq:re1}
\Re \left(\ip{\varphi_1}{\phi_k}\ip{\phi_k}{\eta}\right)=0
\end{equation}
for all $\phi_k\in\mathcal{B}_2$.
The remaining thing is to show that it is possible to choose a unit vector $\eta\in\hi$ such that $\eta$ is orthogonal to $\varphi_1$ and \eqref{eq:re1} holds for all $\phi_k\in\mathcal{B}_2$.

Firstly, suppose that $\ip{\varphi_1}{\phi_1}=0$ or $\ip{\varphi_1}{\phi_2}=0$.  Then the corresponding choice $\eta=\phi_1$ or $\eta=\phi_2$ implies that  \eqref{eq:re1} holds for all $\phi_k\in\mathcal{B}_2$. 
If otherwise $\ip{\varphi_1}{\phi_1}\neq 0$ and $\ip{\varphi_1}{\phi_2}\neq 0$, we then set
\begin{equation*}
\eta = \frac{i\,|\ip{\varphi_1}{\phi_1}\ip{\varphi_1}{\phi_2}|}{\sqrt{|\ip{\varphi_1}{\phi_1}|^2 + |\ip{\varphi_1}{\phi_2}|^2}} (\ip{\varphi_1}{\phi_1}^{-1} \phi_1 - \ip{\varphi_1}{\phi_2}^{-1} \phi_2) \, .
\end{equation*}
It is easy to verify that $\eta$ is orthogonal to $\varphi_1$.
Furthermore, we get
\begin{align*}
& \ip{\varphi_1}{\phi_1}\ip{\phi_1}{\eta} = - \ip{\varphi_1}{\phi_2}\ip{\phi_2}{\eta} = \frac{i\,|\ip{\varphi_1}{\phi_1}\ip{\varphi_1}{\phi_2}|}{\sqrt{|\ip{\varphi_1}{\phi_1}|^2 + |\ip{\varphi_1}{\phi_2}|^2}} \\
& \ip{\varphi_1}{\phi_k}\ip{\phi_k}{\eta} = 0 \quad \textrm{for $k\geq 3$} \, , 
\end{align*}
hence \eqref{eq:re1} holds for all $\phi_k\in\mathcal{B}_2$.
\end{proof}

\section{Rank criterion and qutrit}\label{sec:qutrit}

Let $\mathcal{B}_1=\{ \varphi^1_j \}_{j=1}^d, \ldots, \mathcal{B}_m=\{ \varphi^m_j \}_{j=1}^d$ be $m$ orthonormal bases of $\hi$.
For each vector $\varphi^\ell_j$, we denote $P^\ell_j = \kb{\varphi^\ell_j}{\varphi^\ell_j}$.
Each $P^\ell_j$ is thus a $1$-dimensional projection.
Using this notation we observe that the orthonormal bases $\mathcal{B}_1,\ldots,\mathcal{B}_m$ \emph{cannot} distinguish two different states $\varrho_1$ and $\varrho_2$ if and only if
\begin{align}\label{eq:cannot}
\tr{P^\ell_j (\varrho_1 - \varrho_2)} = 0 \quad \text{for all $\ell=1,\ldots,m$ and $j=1,\ldots,d$} \,.
\end{align}
This condition means that the operator $\varrho_1 - \varrho_2$ is orthogonal to all the projections $P^\ell_j$ in the Hilbert-Schmidt inner product. (We recall that the Hilbert-Schmidt inner product of two operators $A$ and $B$ is 
\(
\ip{A}{B}_{HS} = \tr{A^\ast B}
\)).
Note that the operator $\varrho_1 - \varrho_2$ is selfadjoint and traceless.
Moreover, if $\varrho_1$ and $\varrho_2$ are pure states, then $\varrho_1 - \varrho_2$ has rank $2$.

The previous observation can be developed into a useful criterion when we look at all selfadjoint operators that are orthogonal to the projections $P^\ell_j$.
Suppose $T$ is a nonzero selfadjoint operator satisfying
\begin{equation*}
\tr{P^\ell_j T} = 0 \quad \text{for all $\ell=1,\ldots,m$ and $j=1,\ldots,d$} \,.
\end{equation*}
First of all, as $\sum_j P^\ell_j = \id$, the operator $T$ satisfies $\tr{T}=0$.
To derive some other properties of $T$, we write $T$ in the spectral decomposition
\begin{equation*}
T = \sum_{i=1}^{p} \lambda^+_i \kb{\psi^+_i}{\psi^+_i} - \sum_{i=1}^{n} \lambda^-_i \kb{\psi^-_i}{\psi^-_i} \, , 
\end{equation*}
where $\lambda^+_1,\ldots,\lambda^+_p$ and $-\lambda^-_1,\ldots,-\lambda^-_n$ are the strictly positive and strictly negative eigenvalues of $T$, respectively, and $\psi^+_1,\ldots,\psi^+_p,\psi^-_1,\ldots,\psi^-_n$ are orthogonal unit vectors. 
From $\tr{T}=0$ it follows that 
\begin{equation*}
\sum_{i=1}^{p} \lambda^+_i = \sum_{i=1}^{n} \lambda^-_i \equiv \lambda \, .
\end{equation*}
The rank of $T$ is $p+n$, the sum of its nonzero eigenvalues counted by their multiplicities.
As $T\neq 0$ and $\tr{T}=0$, $T$ must have both positive and negative eigenvalues, meaning that $n\geq 1$ and $p\geq 1$. Therefore, the rank of $T$ is at least $2$.
If the rank of $T$ is $2$, then $n=p=1$ and thus
\begin{equation*}
\lambda^{-1} T = \kb{\psi^+_1}{\psi^+_1} - \kb{\psi^-_1}{\psi^-_1} \, .
\end{equation*}
This implies that the orthonormal bases $\mathcal{B}_1,\ldots,\mathcal{B}_m$ cannot distinguish the pure states $\varrho_1=\kb{\psi^+_1}{\psi^+_1}$ and $\varrho_2=\kb{\psi^-_2}{\psi^-_2}$

Our previous discussion can be summarized in the form of the following criterion.

\begin{proposition}\label{prop:rank} 
Orthonormal bases $\mathcal{B}_1,\ldots,\mathcal{B}_m$ can distinguish all pure states if and only if every nonzero selfadjoint operator $T$ that satisfies
\begin{equation}\label{eq:cannotT}
\tr{P^\ell_j T} = 0 \quad \text{for all $\ell=1,\ldots,m$ and $j=1,\ldots,d$}
\end{equation}
has rank at least $3$.
\end{proposition}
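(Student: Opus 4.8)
The plan is to establish the biconditional by proving its contrapositive in each direction, relying throughout on the elementary equivalence recorded in \eqref{eq:cannot}--\eqref{eq:cannotT}: a pair of distinct states $\varrho_1,\varrho_2$ fails to be distinguished by $\mathcal{B}_1,\ldots,\mathcal{B}_m$ precisely when the selfadjoint operator $T=\varrho_1-\varrho_2$ satisfies \eqref{eq:cannotT}. The whole argument then reduces to matching the geometric notion of an indistinguishable pair of pure states with the algebraic notion of a rank-$2$ operator lying in the orthogonal complement of the projections $P^\ell_j$, so nearly everything needed has already been assembled in the discussion preceding the statement.

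First I would treat the implication ``distinguishing $\Rightarrow$ rank $\geq 3$'' by contraposition. Suppose some nonzero selfadjoint $T$ obeying \eqref{eq:cannotT} has rank at most $2$. Since $T$ is nonzero, selfadjoint and traceless, it automatically carries both a positive and a negative eigenvalue, so its rank is exactly $2$; after dividing by the common value $\lambda$ of the positive and the negative eigenvalue sums it takes the form $\lambda^{-1}T=\kb{\psi^+}{\psi^+}-\kb{\psi^-}{\psi^-}$ for orthogonal unit vectors $\psi^\pm$. Reading \eqref{eq:cannotT} backwards, the two distinct pure states $\kb{\psi^+}{\psi^+}$ and $\kb{\psi^-}{\psi^-}$ are then not distinguished by the given bases, contradicting the hypothesis. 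This is exactly the computation already carried out, so here it only needs to be packaged cleanly.

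For the converse I would again argue by contraposition. Assume the bases fail to distinguish all pure states, so there exist distinct pure states $\varrho_1,\varrho_2$ with $\tr{P^\ell_j(\varrho_1-\varrho_2)}=0$ for all $\ell,j$. Setting $T=\varrho_1-\varrho_2$ produces a nonzero selfadjoint operator satisfying \eqref{eq:cannotT}. Because $\varrho_1$ and $\varrho_2$ are rank-one projections onto distinct rays, their difference is supported on a two-dimensional subspace and is traceless, whence $T$ has rank exactly $2$, in particular rank strictly less than $3$. This exhibits the forbidden operator and closes the loop.

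The only genuine content shared by the two halves is the small structural fact that a nonzero traceless selfadjoint operator can never have rank $1$: on the class of operators considered, ``rank $\geq 3$'' is therefore synonymous with ``not rank $2$'', and the rank-$2$ ones are, up to a positive scalar, exactly the differences of two orthogonal pure states. I expect the main obstacle to be purely expository, namely keeping the logical direction of \eqref{eq:cannotT} straight when passing between the phrasings ``the bases cannot distinguish $\varrho_1,\varrho_2$'' and ``$T$ is orthogonal to every $P^\ell_j$'', rather than any analytic difficulty, since the spectral decomposition and the trace constraints have all been prepared in the preceding paragraphs.
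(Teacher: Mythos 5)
Your proposal is correct and takes essentially the same route as the paper: the paper's proof is exactly the discussion preceding the proposition, which derives $\tr{T}=0$ from $\sum_j P^\ell_j=\id$, uses the spectral decomposition to show a nonzero traceless selfadjoint $T$ orthogonal to all $P^\ell_j$ has rank $\geq 2$ with equality exactly when $\lambda^{-1}T$ is a difference of two orthogonal pure states, and matches this with the fact that a difference of distinct pure states has rank $2$. Packaging the two directions as contrapositives, as you do, is just a clean reorganization of that same argument.
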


Using this criterion we can prove the following statement.

\begin{proposition}\label{prop:critdim3}
In dimension $d=3$,
\begin{enumerate}[(i)]
\item three orthonormal bases cannot distinguish all pure states;
\item four orthonormal bases can distinguish all pure states if and only if the Hilbert-Schmidt orthogonal complement of the projections $\{P^\ell_j\mid\ell=1,2,3,4,\,j=1,2,3\}$ is either $\{0\}$ or is the linear span of a single nonzero and invertible selfadjoint operator.
\end{enumerate}
\end{proposition}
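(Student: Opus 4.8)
The plan is to reduce both parts to the rank criterion of Proposition~\ref{prop:rank} together with a single geometric fact about real subspaces of traceless selfadjoint operators on $\C^3$. Write $W$ for the real vector space of selfadjoint operators $T$ satisfying the orthogonality conditions \eqref{eq:cannotT} for the bases under consideration. Since $\sum_j P^\ell_j=\id$ for each $\ell$, summing these conditions over $j$ forces $\tr{T}=0$, so $W$ consists of traceless operators only. On $\C^3$ the selfadjoint operators form a $9$-dimensional real space and the traceless ones an $8$-dimensional subspace; moreover a nonzero traceless selfadjoint operator has rank $2$ or $3$, since a rank-$1$ selfadjoint operator $\pm\lambda\kb{\psi}{\psi}$ has nonzero trace. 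By Proposition~\ref{prop:rank}, the bases distinguish all pure states exactly when every nonzero $T\in W$ has rank $3$, i.e.\ is invertible.

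The key lemma I would isolate is: \emph{every real subspace $V$ of the traceless selfadjoint operators on $\C^3$ with $\dim V\geq 2$ contains a nonzero operator of rank $2$.} To prove it, I would exploit that the determinant, restricted to Hermitian matrices, is a real-valued homogeneous polynomial of the odd degree $3$, so that $\det(-T)=-\det(T)$. Fixing linearly independent $A,B\in V$ and setting $T(\theta)=\cos\theta\,A+\sin\theta\,B$, the map $\theta\mapsto\det T(\theta)$ is continuous and satisfies $\det T(\theta+\pi)=-\det T(\theta)$; by the intermediate value theorem it vanishes at some $\theta_0$. Since $(\cos\theta_0,\sin\theta_0)\neq(0,0)$ and $A,B$ are independent, $T(\theta_0)\neq 0$, so $T(\theta_0)$ is a nonzero singular traceless operator and hence has rank exactly $2$.

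For part (i) I would bound $\dim W$ from below by counting. The three projections of a single basis span a $3$-dimensional space containing $\id$, so the projections of three bases together span a space $S$ of dimension at most $1+3\cdot(3-1)=7$; consequently $\dim W=9-\dim S\geq 2$. Applying the key lemma to (a two-dimensional subspace of) $W$ produces a nonzero rank-$2$ operator, and Proposition~\ref{prop:rank} then shows that three bases cannot distinguish all pure states.

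For part (ii) both implications are formal. If $W=\{0\}$ or $W=\R T_0$ with $T_0$ invertible, then every nonzero element of $W$ is a nonzero scalar multiple of an invertible operator, hence invertible, and the bases distinguish all pure states. Conversely, if they do, then $W$ contains no nonzero rank-$2$ operator, so by the key lemma $\dim W\leq 1$; and if $\dim W=1$ its generator must itself be invertible, as otherwise it would be a nonzero non-invertible element of $W$. The main obstacle throughout is the key lemma, whose crucial ingredient is the \emph{odd} degree of the determinant on $\C^3$ combined with the connectedness of the real projective line; the remainder is dimension bookkeeping plus the rank criterion.
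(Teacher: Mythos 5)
Your proposal is correct and follows essentially the same route as the paper's proof: your key lemma is precisely the paper's central claim, proved by the same device (oddness of $\det$ as a degree-$3$ homogeneous polynomial on Hermitian matrices plus the intermediate value theorem; the paper chooses signs $\alpha_1,\alpha_2$ and runs the segment $\lambda\alpha_1T_1+(1-\lambda)\alpha_2T_2$, you run the circle $\cos\theta\,A+\sin\theta\,B$), and the dimension count in part (i) is identical. The only point you gloss over is that the statement of (ii) refers to the Hilbert--Schmidt orthogonal complement taken in the complex space of \emph{all} operators, whereas your argument is carried out entirely in the real space $W$ of selfadjoint solutions of \eqref{eq:cannotT}; since each $P^\ell_j$ is selfadjoint, that complex complement is closed under $T\mapsto T^*$ and hence equals $W\oplus iW$, which is exactly what the paper's manipulation with $T+T^*$ and $i(T-T^*)$ establishes. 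Adding that one-line remark makes your conditions on $W$ literally equivalent to the stated conditions on the complex orthogonal complement, and with it your proof is complete.
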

\begin{proof}
Our proof is adapted from the analogous one of \cite[Proposition 5]{HeMaWo13}. 
By Proposition \ref{prop:rank}, $m$ orthonormal bases $\mathcal{B}_1,\ldots,\mathcal{B}_m$ distinguish all pure states if and only if every nonzero selfadjoint operator $T\in\{P^\ell_j\mid\ell=1,\ldots,m,\,j=1,2,3\}^\perp$ is invertible. We claim that in this case there cannot exist two linearly independent selfadjoint operators $T_1,T_2\in\{P^\ell_j\mid\ell=1,\ldots,m,\,j=1,2,3\}^\perp$.\\
Indeed, suppose on the contrary that $T_1$ and $T_2$ are two such operators. Since $\det(T_1)$ and $\det(T_2)$ are nonzero, there are real numbers $\alpha_1,\alpha_2$ such that $\det{(\alpha_1 T_1)}>0$ and $\det{(\alpha_2 T_2)} <0$. By linear independence, the convex combination $\lambda \alpha_1 T_1 + (1-\lambda) \alpha_2 T_2$ is nonzero for all $\lambda\in[0,1]$. The determinant is a continuous function, and hence the intermediate value theorem implies that $\det{[\lambda_0 \alpha_1 T_1 + (1-\lambda_0) \alpha_2 T_2]} =0$ for some $\lambda_0 \in (0,1)$.
Therefore, the nonzero selfadjoint operator $T=\lambda_0 \alpha_1 T_1 + (1-\lambda_0) \alpha_2 T_2$ is not invertible. But $T$ satisfies \eqref{eq:cannotT}, which then contradicts Proposition \ref{prop:rank}.\\
We now come to the proof of (i). Three orthonormal bases $\mathcal{B}_1,\mathcal{B}_2,\mathcal{B}_3$ give $9$ projections $P^\ell_j$.
However, as $\sum_{j=1}^3 P^\ell_j = \id$ for each $\ell=1,2,3$, at most $7$ of them are linearly independent. The dimension of the real vector space of all selfadjoint operators is $9$, hence we conclude that there are at least $2$ linearly independent selfadjoint operators that are orthogonal to all projections $P^\ell_j$. The previous claim then implies that $\mathcal{B}_1,\mathcal{B}_2,\mathcal{B}_3$ cannot distinguish all pure states.\\
To prove item (ii), observe that, if the linear space $\{P^\ell_j\mid\ell=1,2,3,4,\,j=1,2,3\}^\perp$ has dimension $k$, then we can find a basis of it consisting of selfadjoint operators. Indeed, if $T_1,\ldots,T_k$ is any linear basis, then the selfadjoint operators $T^+_1,\ldots,T^+_k,T^-_1,\ldots,T^-_k$ given by
$$
T^+_h = T+T^*,\qquad T^-_h = i(T-T^*)
$$
still satisfy \eqref{eq:cannotT} and generate the linear space $\{P^\ell_j\mid\ell=1,2,3,4,\,j=1,2,3\}^\perp$. Extracting $k$ linearly independed elements from these operators, we get a basis of selfadjoint operators.  Therefore, by our earlier claim the four bases $\mathcal{B}_1,\dots,\mathcal{B}_4$ can distinguish all pure states only if either $k=0$ or $k=1$. In the latter case, $\{P^\ell_j\mid\ell=1,2,3,4,\,j=1,2,3\}^\perp = \C T$ for some selfadjoint operator $T$, which must then be invertible by Proposition \ref{prop:rank}. Conversely, the sufficiency of these two conditions is clear by Proposition \ref{prop:rank}.
\end{proof}

In the next section, we will see that in dimension $d=3$ actually there exist four orthonormal bases distinguishing all pure states. The condition in item (ii) of Proposition \ref{prop:critdim3} is then very useful to explicitely construct such bases. As an example, Section \ref{sec:spin1} below will provide an application to the measurement of the orthonormal bases corresponding to four different spin directions in a spin-$1$ system.

\section{Four bases that distinguish all pure states}\label{sec:4bases}
Up to now we have seen that already in dimension 3, we can never find three orthonormal bases which would yield unique determination of all pure states. Therefore, the minimal number of bases in that case is at least four. In this section we show that, perhaps surprisingly, four properly chosen orthonormal bases are sufficient regardless of the dimension of the Hilbert space.

\begin{proposition}\label{prop:4bases} 
In any finite dimension, there exist four orthonormal bases $\mathcal{B}_1,\mathcal{B}_2,\mathcal{B}_3,\mathcal{B}_4$ that can distinguish all pure states. 
\end{proposition}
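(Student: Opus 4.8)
The plan is to invoke the rank criterion of Proposition \ref{prop:rank}. For two distinct pure states the difference $\varrho_1-\varrho_2$ is a nonzero selfadjoint traceless operator of rank exactly $2$, so four bases distinguish all pure states precisely when the only selfadjoint operators $T$ satisfying \eqref{eq:cannotT} for $m=4$ have rank $\geq 3$, i.e. no rank-$2$ selfadjoint $T$ is Hilbert--Schmidt orthogonal to all of the $P^\ell_j$. I would work with the equivalent phase-retrieval reformulation: it suffices to construct four bases whose outcome distributions determine every unit vector $\psi$ up to a global phase. To start, I would fix $\mathcal{B}_1=\{e_1,\dots,e_d\}$ to be the canonical basis, so that the first distribution returns all the moduli $\mo{\ip{e_j}{\psi}}^2=\mo{\psi_j}^2$, reducing the problem to recovering the relative phases of the coordinates $\psi_j$.

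The heart of the argument is the choice of the three remaining bases $\mathcal{B}_2,\mathcal{B}_3,\mathcal{B}_4$, and here I would follow Jaming. The basic mechanism is interference: for $c\in\C$ one has $\mo{\ip{\tfrac{1}{\sqrt2}(e_j+c e_k)}{\psi}}^2=\half(\mo{\psi_j}^2+\mo{c}^2\mo{\psi_k}^2)+\Re(c\,\psi_j\bar\psi_k)$, so once the moduli are known, measuring against $\tfrac{1}{\sqrt2}(e_j+e_k)$ yields $\Re(\psi_j\bar\psi_k)$ and against $\tfrac{1}{\sqrt2}(e_j+i e_k)$ yields $\Im(\psi_j\bar\psi_k)$; together these recover the complex product $\psi_j\bar\psi_k$, hence the phase difference $\arg\psi_j-\arg\psi_k$ whenever $\psi_j\psi_k\neq 0$. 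The design problem is therefore to pack such interference vectors into only three orthonormal bases so that the set of pairs $(j,k)$ for which $\psi_j\bar\psi_k$ becomes known forms a connected graph on $\{1,\dots,d\}$: connectedness propagates the relative phases from a single reference coordinate to all others and pins down $\psi$ up to one global phase. I would realise this by pairing coordinates cyclically (along a Hamiltonian path $1\!-\!2\!-\!\cdots\!-\!d$) and by adjoining a Fourier-type basis, whose vectors $\tfrac{1}{\sqrt d}\sum_k\omega^{mk}e_k$ simultaneously involve all coordinates and thereby supply enough additional independent interference data to close the residual phase ambiguities without spending a separate basis on each real and imaginary part.

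The step I expect to be the main obstacle is precisely this packing-and-connectivity argument under the orthonormality constraint. Each orthonormal basis offers only $d$ outcomes and forces mutually orthogonal vectors, so one cannot freely probe an arbitrary family of pairs $(j,k)$; the three auxiliary bases must simultaneously deliver both real and imaginary parts of sufficiently many products $\psi_j\bar\psi_k$ \emph{and} keep the associated phase graph connected. A further delicate point is robustness to vanishing coordinates: if some $\psi_j=0$ the incident edges carry no phase information, so a single chain of products can break, and the construction must guarantee connectivity of the \emph{surviving} subgraph for every possible support of $\psi$. I would address this by choosing $\mathcal{B}_2,\mathcal{B}_3,\mathcal{B}_4$ so that their combined interference graph remains connected after deleting the vertices where $\psi$ vanishes, and I would finish by translating the resulting uniqueness back through Proposition \ref{prop:rank}, checking that no rank-$2$ selfadjoint operator can satisfy all four families of orthogonality constraints. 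For comparison, mere existence can also be obtained by a dimension-counting genericity argument in the spirit of Mondragon--Voroninski, but the explicit cyclic-plus-Fourier construction has the advantage of being constructive and uniform in $d$.
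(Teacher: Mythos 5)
Your reduction to phase retrieval and your interference mechanism (recovering $\Re(\psi_j\bar\psi_k)$ and $\Im(\psi_j\bar\psi_k)$ from measurements against $\tfrac{1}{\sqrt2}(e_j+e_k)$ and $\tfrac{1}{\sqrt2}(e_j+ie_k)$) are fine, but the proof stops exactly where the theorem begins: you never construct the three auxiliary bases, and the obstacle you yourself flag is fatal to the design you sketch rather than a technicality. First, orthonormality is more restrictive than you acknowledge: two $2$-sparse unit vectors whose supports share exactly one index cannot be orthogonal, so within a single orthonormal basis the interference vectors must sit on \emph{disjoint} pairs. One basis therefore probes at most $d/2$ pairs $(j,k)$, and for each pair it yields only one real linear functional of the product $\psi_j\bar\psi_k$ (the real part of $\bar c\,\psi_j\bar\psi_k$ for the single phase $c$ used); three bases give at most $3d/2$ real numbers, which falls short of the $2(d-1)$ needed to determine the complex products along your Hamiltonian path as soon as $d\geq 5$. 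Second, pairwise data can never be made robust to vanishing coordinates, no matter how many bases are used: if $\psi$ is supported on a set $S$, the only informative products are those along edges inside $S$, so uniqueness for all supports forces the induced graph on \emph{every} $S$ to be connected; taking $\mo{S}=2$ shows the graph would have to be complete, requiring $\Theta(d^2)$ interference vectors. Consequently the entire burden falls on the Fourier-type basis, whose contribution you describe only as supplying ``enough additional independent interference data'' --- but that is precisely the statement to be proved, and no argument for it is given. Ending with a check against Proposition \ref{prop:rank} cannot substitute for this missing construction.

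The paper (following Jaming) resolves the difficulty by abandoning the connectivity picture altogether. The four bases are built from a sequence of orthogonal polynomials: $\mathcal{B}_1$ and $\mathcal{B}_3$ evaluate $p_0,\ldots,p_{d-1}$ at the $d$ roots of $p_d$ (without and with phases $\e^{ik\alpha}$, $\alpha\notin\pi\rational$), while $\mathcal{B}_2$ and $\mathcal{B}_4$ evaluate them at the $d-1$ roots of $p_{d-1}$, completed by the vector $(0,\ldots,0,1)^T$; orthogonality comes from the Christoffel--Darboux formula, not from disjointness of supports, so every basis vector involves all coordinates at once. The conditions \eqref{eq:cannotT} then say that two explicit polynomials of degree at most $2d-2$ built from the entries of $T=\kb{\xi}{\xi}-\kb{\eta}{\eta}$ vanish at $2d-1$ distinct points, hence vanish identically, and a descending induction on the leading coefficients gives $t_{d,d}=0$, $t_{d,d-1}=0,\ldots$, i.e.\ $\xi=\e^{i\theta}\eta$ coordinate by coordinate. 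The case of vanishing coordinates --- the one that breaks your graph argument --- is handled simply by restarting the same induction at the first nonzero coordinate. To complete a proof along your own lines you would need a genuine analysis of the Fourier data valid for every support pattern of $\psi$, which is a substantial theorem in itself; likewise the genericity route of Mondragon--Voroninski that you mention as a fallback requires an actual transversality/dimension argument, not dimension counting alone.
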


The proof is based on an explicit construction of the bases in the Hilbert space $\hh=\C^d$, as presented by Jaming in \cite{Jaming14}. 
His construction uses properties of the Hermite polynomials, but a similar construction works also for any other sequence of orthogonal polynomials. Different polynomials will lead to different bases, so this freedom in choosing the polynomials may be sometimes useful.

The construction begins by fixing a sequence of orthogonal polynomials.
By a \emph{sequence of orthogonal polynomials} we mean a sequence $p_0,p_1,p_2,\ldots$ of real polynomials such that the degree of $p_n$ is $n$ and
\begin{align*}
\int_a^b p_n(x)p_\ell(x) \ w(x)\, \de x =  \delta_{n\ell} 
\end{align*}
for a nonnegative weight function $w$ and either finite or infinite interval $[a,b]$.
The most common sequences of orthogonal polynomials are the (normalized versions of) Chebyshev, Hermite,  Laguerre and Legendre polynomials.
For instance, the $n^{\rm th}$ Hermite polynomial $H_n$ is defined by the formula 
\begin{equation*}
H_n(x)  =  \frac{(-1)^n}{\sqrt{2^n n!}}  \e^{x^2} \frac{\de^n}{\de x^n} \e^{-x^2}.
\end{equation*}
The Hermite polynomials form a sequence of orthogonal polynomials with respect to the weight function $w(x)=\frac{1}{\sqrt{\pi}} e^{-x^2}$ and the infinite interval $(-\infty,\infty)$.

The following construction uses three basic properties shared by any sequence of orthogonal polynomials.
Let $c_n$ denote the highest coefficient of a polynomial $p_n$.
It can be shown (see e.g. \cite[pp. 43-46]{Szego}) that the following properties hold:
\begin{enumerate}[(a)]
\item The roots of $p_n$ are all real and distinct.
\item $p_n$ and $p_{n+1}$ have no common roots.
\item For all $x\neq y$, the \emph{Christoffel-Darboux formula} holds:
\begin{equation*}
\sum_{j=0}^{n} p_j(x) p_j(y) = \frac{c_n}{c_{n+1}} \frac{p_{n+1}(x)p_{n}(y) - p_{n}(x)p_{n+1}(y)}{x-y} \, .
\end{equation*}
\end{enumerate}
In dimension $d$, only the first $d+1$ polynomials of the sequence will be needed.

To construct the first basis, let $x_1,\ldots, x_d$ be the roots of the polynomial $p_d$, and define 
\begin{equation}\label{eqn:bases_1}
\widetilde{\varphi}^1_j= (p_0(x_j),p_1(x_j),\ldots, p_{d-1}(x_j))^T
\end{equation}
for $j=1,\ldots, d$.
Then each vector $\widetilde{\varphi}_j^1$ is nonzero since $p_0(x_j)\neq 0$ ($p_0$ is a nonzero constant polynomial), and since the zeros are all distinct, we may apply the Christoffel-Darboux formula to get
\begin{eqnarray*}
\langle \widetilde{\varphi}_i^1\vert \widetilde{\varphi}_j^1\rangle &=& \sum_{k=0}^{d-1} p_k(x_i) p_k(x_j) \\
&=& \frac{c_{d-1}}{c_d} \frac{p_d(x_i)p_{d-1}(x_j) - p_{d-1}(x_i)p_d(x_j)}{x_i-x_j} =0
\end{eqnarray*}
for $i\neq j$. 
Thus, the vectors are orthogonal. 
By normalizing $\varphi_j^1 = \Vert \widetilde{\varphi}_j^1\Vert^{-1}\widetilde{\varphi}_j^1$ we obtain an orthonormal basis $\mathcal{B}_1=\{\varphi_j^1 \}_{j=1}^d$ of $\C^d$.

Let then $y_1,\ldots, y_{d-1}$ be the roots of the polynomial $p_{d-1}$ and define
\begin{equation}\label{eqn:bases_2}
\widetilde{\varphi}_j^2= (p_0(y_j),p_1(y_j),\ldots, p_{d-1}(y_j))^T.
\end{equation}
for $j=1,\ldots, d-1$. These vectors are again nonzero and orthogonal, and since the $y_j$'s are the roots of $p_{d-1}$, the last component is $ p_{d-1}(y_j)=0$. 
Hence, we can again normalize $\varphi_j^2 = \Vert \widetilde{\varphi}_j^2\Vert^{-1}\widetilde{\varphi}_j^2$  and define $\varphi_d^2=(0,\ldots, 0,1)^T$ to obtain another orthonormal basis $\mathcal{B}_2 = \{ \varphi_j^2 \}_{j=1}^d$.

For the two remaining bases, we first pick a number $\alpha\in\R$ which is not a rational multiple of $\pi$.
We then define 
\begin{equation}\label{eqn:bases_3}
\widetilde{\varphi}_j^3= (p_0(x_j), \e^{i\alpha}p_1(x_j),\ldots, \e^{i(d-1)\alpha}p_{d-1}(x_j))^T,
\end{equation}
which, after normalization, gives the third basis $\mathcal{B}_3 = \{   \varphi_j^3 \}_{j=1}^d$. 
Finally, we set 
\begin{equation}\label{eqn:bases_4}
\widetilde{\varphi}_j^4= (p_0(y_j),\e^{i\alpha}p_1(y_j),\ldots, \e^{i(d-1)\alpha}p_{d-1}(y_j))^T
\end{equation}
which, after normalizing and adding the vector $\varphi_d^4=(0,\ldots, 0,1)^T$, gives the last basis $\mathcal{B}_4 = \{   \varphi_j^4 \}_{j=1}^d$.

Using the basis $\bb_1,\ldots,\bb_4$, we can now prove our main result.

\begin{proof}[Proof of Proposition \ref{prop:4bases}]
As usual, we denote $P_j^\ell = \vert \varphi^\ell_j\rangle\langle \varphi^\ell_j\vert$ for $\ell=1,\ldots,4$ and $j=1,\ldots,d$.
By \eqref{eq:cannot}, in order to prove that the bases $\bb_1,\ldots,\bb_4$ can distinguish all pure states, we need to show that, for any two pure states $\varrho_1 = \kb{\xi}{\xi}$ and $\varrho_2 = \kb{\eta}{\eta}$, the condition
\begin{equation}\label{eq:ortoT}
\tr{P^\ell_j (\varrho_1 - \varrho_2)} = 0 \quad \text{for all $\ell=1,\ldots,m$ and $j=1,\ldots,d$}
\end{equation}
implies $\xi = \e^{i\theta}\eta$ for some $\theta\in\R$.
To see this, let $T = \varrho_1 - \varrho_2$, and write $T=(t_{ij})_{i,j=1}^{d}$ with respect to the standard basis of $\C^d$. The selfadjointness of $T$ implies that $t_{ii}\in\R$ and $t_{ji}=\overline{t_{ij}}$.
Using the explicit form of the vectors from \eqref{eqn:bases_1}--\eqref{eqn:bases_4}, the orhogonality condition \eqref{eq:ortoT} then yields
\begin{align}
& \sum_{k,l=0}^{d-1} t_{k+1,l+1} \ p_k(z)p_l(z) = 0 \label{eqn:polynomial-1} \\
& \sum_{k,l=0}^{d-1} t_{k+1,l+1} \e^{i(l-k)\alpha}\ p_k(z)p_l(z) = 0\label{eqn:polynomial-2}
\end{align}
for every $z\in\{x_1,\ldots, x_d, y_1,\ldots, y_{d-1}\}$. 
The degree of $p_n$ is $n$, so the polynomials in \eqref{eqn:polynomial-1} and \eqref{eqn:polynomial-2} have degree at most $2d-2$.
But the above equations state that these polynomials have $2d-1$ distinct roots, so they are actually identically zero.
Therefore,  \eqref{eqn:polynomial-1} and \eqref{eqn:polynomial-2} hold for all $z\in\real$.\\
We can now look at the highest order terms in \eqref{eqn:polynomial-1} and \eqref{eqn:polynomial-2}. 
This corresponds to  $k=l=d-1$ so that by linear independence of the monomials $1,z,z^2,\ldots,z^{2d-2}$ we have  $t_{d,d}=0$.
Since
$$
t_{ij} = \xi_i\overline{\xi_j}  -  \eta_i \overline{\eta_j} \quad \text{for all $i,j$} \,,
$$
it follows that $\vert \xi_{d}\vert^2 = \vert \eta_{d}\vert^2$ so that $\xi_{d} =\e^{i\theta}\eta_{d}$ for some $\theta\in\R$. 
Assume for the moment that $\xi_d\neq 0$.\\
We can now consider the terms of order $2d-3$. Since they appear only for $k=d-1$ and $l=d-2$ or vice versa, we must have
\begin{equation*}
t_{d,d-1}+t_{d-1,d}=t_{d,d-1} \e^{-i\alpha} + t_{d-1,d}\e^{i\alpha}=0 \, .
\end{equation*} 
By substituting $t_{d-1,d} = \overline{t_{d,d-1}}$ we then get 
\begin{equation*}
{\rm Re}\, (t_{d,d-1} )= {\rm Re }\,(t_{d,d-1} \e^{-i\alpha} )=0 \, .
\end{equation*}
Since $\e^{-i\alpha}\notin\R$, we have $t_{d,d-1}=0$ which implies that $\xi_{d}\overline{\xi_{d-1}}  =  \eta_{d} \overline{\eta_{d-1}}$. By our assumption, $\xi_{d-1}=\e^{i\theta} \eta_{d-1}$.\\
We now proceed by induction. Suppose that $\xi_{d-p}=\e^{i\theta}\eta_{d-p}$ for $p=0,\ldots, n-1$. It follows that $t_{d-p,d-q}=0$ for $p,q\leq n-1$, so that the highest order terms in \eqref{eqn:polynomial-1} and \eqref{eqn:polynomial-2} are of order $2d-n-2$, and they appear only with $k=d-1$ and $l=d-n-1$ or vice versa. This gives us 
\begin{align*}
{\rm Re}\, (t_{d,d-n})={\rm Re}\, (t_{d,d-n}\e^{-i\alpha})=0
\end{align*}
so that $t_{d,d-n}=0$. Hence, $\xi_{d-n}=\e^{i\theta}\eta_{d-n}$.\\
Finally, if we have $\xi_{d}=0$ so that also $\eta_{d}=0$, then $t_{d,n}=0$ and $t_{n,d}=0$ for all $n$, and the summations in \eqref{eqn:polynomial-1} and \eqref{eqn:polynomial-2} terminate at $d-2$. Hence, we can repeat the above process starting from the highest order term which is now of order $2d-4$. By induction, if $\xi_{d}=\ldots=\xi_{d-(n-1)}=0$ but $\xi_{d-n}\neq 0$, then also $\eta_{d}=\ldots=\eta_{d-(n-1)}=0$ so that the process can be started from the terms of order  $2(d-n)-2$ which would give $\xi_{d-n}=\e^{i\theta}\eta_{d-n}$ and then proceed as before. 
This completes the proof of Proposition \ref{prop:4bases}.
\end{proof}

\section{Insufficiency of three bases in dimension $5$ and higher }\label{sec:lower_bound}

We have seen that in every finite dimension $d=2,3,\ldots$, it is possible to choose a set of four orthonormal bases that distinguish all pure states, but no pair of orthonormal bases can have this property.
Can a set of three orthonormal bases distinguish all pure states?
As we know already, this question has a positive answer in $d=2$ and a negative answer in $d=3$.
In the following, we explain how topological considerations lead to a negative answer for all dimensions $d\geq 5$.
For more details on the topological aspects of state determination we refer to \cite{HeMaWo13} and \cite{KeVrWo15}.

First, it is useful to generalize the property of distinguishing pure states to arbitrary sets $\mathcal{A}=\{A_1,\ldots,A_n\}$ of selfadjoint operators on $\hi$. We say that such a set $\mathcal{A}$ {\em distinguishes all pure states} if for two different pure states $\varrho_1$ and $\varrho_2$, there exists \(A_i\in\mathcal{A}\) such that
\begin{align}\label{eq:can2}
\tr{A_i \varrho_1} \neq \tr{A_i \varrho_2} \, .
\end{align}
Clearly, if $\mathcal{A}$ consists  of the orthogonal projections defined by a collection of orthonormal bases, we obtain again \eqref{eq:can}. 

In what follows, we use the notation below: 
for each dimension $d=2,3,...$, we denote by
\begin{itemize}
\item $\mathfrak{s}_d$ the minimal number of \emph{selfadjoint operators} which distinguish all pure states;
\item $\mathfrak{b}_d$ the minimal number of \emph{orthonormal bases} which distinguish all pure states.
\end{itemize}

It is easy to see that $\mathfrak{s}_d$ gives a lower bound for $\mathfrak{b}_d$.
Namely, if we have $m$ bases, they give $m\cdot d$ projections. 
All the projections corresponding to a basis sum up to the identity operator $\id$; since $\tr{\id\varrho} = 1$ for all states $\varrho$, one projection for each basis can then be left out without losing any information.
We thus conclude that
\begin{equation}\label{eq:nd>md}
(d-1) \cdot \mathfrak{b}_d \geq \mathfrak{s}_d \, .
\end{equation}
Therefore, lower bounds for \( \mathfrak{s}_d\) translate  into lower bounds for $\mathfrak{b}_d$.

Let us denote by $\prem$ the set of pure states.
Saying that $\mathcal{A}$ distinguishes all pure states means that the map
\begin{equation}\label{eq::emb}
f^{\mathcal{A}}\colon \prem  \to \R^n, \qquad \varrho  \mapsto (\tr{A_1\varrho},\ldots,\tr{A_n\varrho})
\end{equation}
is injective. Hence, roughly speaking, the selfadjoint operators $\mathcal{A}$ distinguish all pure states if and only if the map $\eqref{eq::emb}$ identifies $\cP$ with a subset of the Euclidean space $\R^n$. This is a very useful observation: indeed, suppose, for instance, that it is possible to prove that there is a natural number $n_0$ such that no injective map $\cP\to \R^n$ exists if $n<n_0$; then one can conclude that $\mathfrak{s}_d\geq n_0$.

Up to this point our considerations were purely set theoretical. However, it can be proved that, if the map $f^\mathcal{A}$ is injective, then it is actually a smooth embedding (see Proposition \ref{prop:app2} of Appendix~A; we refer to \cite{Lee2009} for the necessary notions from differential geometry).
Therefore, the selfadjoint operators $\mathcal{A}$ can distinguish all pure states only if the manifold of pure states $\prem$ can be smoothly embedded in $\R^n$.

The minimal $n$ for which $\cP$ can be smoothly embedded in $\R^n$, is called the \emph{embedding dimension} of $\cP$.
From the heuristic point of view, if $M_1$ and $M_2$ are two smooth manifolds of the same dimension, we expect that the embedding dimension of $M_1$ is greater than the embedding dimension of $M_2$ if the shape of $M_1$ is more involved than the shape of $M_2$.
For instance, a $2$-dimensional torus can be smoothly embedded in $\R^3$ whereas the Klein bottle requires $\R^4$.

The problem of determining the embedding dimension of complex projective spaces has been studied extensively in the mathematical literature  and  the best bounds are, up to our knowledge,  those obtained in \cite{Mayer65}. 
They lead to the conclusion that
\begin{equation}\label{eq:lower-bound}
\mathfrak{s}_d \geq \left\{ \begin{array}{ll}
4d-2\alpha-3& \text{ for all }d\geq 2\\
4d-2\alpha-2 & \text{ for }d\text{ odd, and }\alpha=2\mod 4 \\
4d-2\alpha-1 & \text{ for }d\text{ odd, and }\alpha=3\mod 4
\end{array}\right.
\end{equation}
where $\alpha$ is the number of $1$'s in the binary expansion of $d-1$ (see \cite[Theorem 6]{HeMaWo13}). 

Writing the expansion $d-1 = \sum_{j=0}^n a_j 2^j$, we have that $2^n\leq d-1$ if and only if $n\leq \log_2(d-1)$. 
Hence, 
\begin{equation*}
\alpha\leq n+1 \leq \log_2(d-1)+1
\end{equation*}
so that 
\begin{equation*}
\mathfrak{s}_d \geq 4d-2\alpha-3 \geq 4d -2\log_2(d-1) - 5 
\end{equation*}
for all $d\geq 2$.
Inserting this into \eqref{eq:nd>md} we obtain
\begin{equation}\label{eqn:lower_bound}
\mathfrak{b}_d \geq \frac{4d-2\log_2(d-1)-5}{d-1} = 4-\frac{2\log_2(d-1)+1}{d-1} \, .
\end{equation} 
To see the consequences of the derived lower bound \eqref{eqn:lower_bound}, we examine the function 
\begin{equation*}
f:[2,\infty) \to \real \, , \quad f(x) =  4-\frac{2\log_2(x-1)+1}{x-1}
\end{equation*}
 (see Fig.~\ref{fig:minimum}). Firstly, we notice that $f(2) = 3$ and $\lim_{x\to\infty}f(x) = 4$. Secondly, the derivative $f'(x)$ has a single zero at $x_0=1+\frac{e}{\sqrt{2}}\simeq 2.922$, and it satisfies $f'(x) <0$ for $x\in(2,x_0)$ and $f'(x)>0$ for $x>x_0$. 
 Finally, $f(8)\simeq 3.055$ so that $f(x) > 3$ for all $x\geq 8$. 
Since $\mathfrak{b}_d$ is an integer and $\mathfrak{b}_d\geq f(d)$, we thus have $\mathfrak{b}_d \geq 4$ for all $d\geq 8$. 

\begin{figure}
\centering
            \includegraphics[width=10cm]{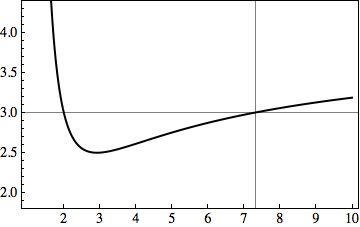} 
           \caption{The function $f(x)=4-\frac{2\log_2(x-1)+1}{x-1}$ gives a lower bound for the mimimal number of orthonormal bases.
           Since $\mathfrak{b}_d\geq f(d)$ for each integer $d\geq 2$, we conclude that $\mathfrak{b}_d \geq 4$ for all $d\geq 8$.}
    \label{fig:minimum}
\end{figure}

For the dimensions $d=2,\ldots, 7$ we need to analyse the lower bound more carefully. 
In the table below we have calculated the values of the lower bound for $\mathfrak{s}_d$ given in \eqref{eq:lower-bound}.
We thus observe that unless $d=2$ or $4$, three orthonormal bases cannot be sufficient. 

\begin{center}
\begin{tabular}{| c | c | c | c |}
\hline
$d$ & $\mathfrak{s}_d$ & $3(d-1)$\\
\hline\hline
2  & 3 & 3 \\
3  & 7 & 6 \\
4  & 9 & 9 \\
5  & 15 & 12 \\
6  & 17 & 15 \\
7 & 22 & 18 \\
\hline
\end{tabular}
\end{center}

In conclusion, for the dimensions $d=3$ and $d\geq 5$ the minimal number of bases is four.
For $d=2$ the minimal number is three and in the case $d=4$ it is, based on our knowledge, either three or four.

\section{Insufficiency of four product bases in dimension $d=4$}\label{sec:localbas}

In our search for the minimal number of orthonormal bases that can distinguish all pure states, the remaining question is:

\emph{Is it possible to find three orthonormal bases in dimension $4$ that can distinguish all pure states?}

Unfortunately, we are able to provide only a partial answer to this question.
Namely, in the following we will see that if we consider the splitting of the $4$-dimensional Hilbert space into a tensor product $\hi = \complex^2\otimes \complex^2$,  then for any three bases consisting solely of product vectors, the answer is negative. 
In fact, we will show that even four product bases are not enough.

Before we concentrate on dimension $4$, we slightly elaborate the state distinction criterion used in earlier sections.
As we have seen, a set $\mathcal{A}=\{A_1,\ldots,A_n\}$ of selfadjoint operators cannot distinguish two states $\varrho_1$ and $\varrho_2$ if and only if 
\begin{align*}
\tr{ A_j (\varrho_1 - \varrho_2)} = 0 \quad\text{for all $j=1,\ldots,n$}\,.
\end{align*}
Since the trace is a linear functional, this is equivalent to
\begin{align*}
\tr{ \Big(\sum_{j=1}^n \alpha_j A_j\Big) (\varrho_1 - \varrho_2)} = 0 \quad \text{for all $\alpha_1,\ldots,\alpha_n\in\C$}\,.
\end{align*}
Moreover, as $\tr{\varrho_1 - \varrho_2} = 0$, we can rewrite the previous condition as
\begin{align*}
\tr{ \Big(\alpha_0 \id + \sum_{j=1}^n \alpha_j A_j\Big) (\varrho_1 - \varrho_2)} = 0 \quad \text{for all $\alpha_0,\alpha_1,\ldots,\alpha_n\in\C$}\,.
\end{align*}
This equivalent formulation shows that, for the purpose of state distinction, we can always switch from $\mathcal{A}$ to the linear space spanned by $\mathcal{A}$ and $\id$.
We denote this subspace of operators by $\mathcal{R}(\mathcal{A})$, i.e., 
\begin{equation}
\label{eq:opsyst}
\mathcal{R}(\mathcal{A}) = \left\{\alpha_0 \id + \sum_{j=1}^n \alpha_j A_j \bigg\vert \alpha_j \in\C \right\}
\end{equation}
and call it the (complex) {\em operator system} generated by the selfadjoint operators $\mathcal{A}$. If the set $\aa$ consists of the projections corresponding to $m$ orthonormal bases $\bb_1 = \{\varphi^1_j\}_{j=1}^d,\ldots,\bb_m = \{\varphi^m_j\}_{j=1}^d$, we write also
$$
\h R(\bb_1,\ldots,\bb_m) = \h R(\{P^\ell_j\mid \ell = 1,\ldots,m,\,j=1,\ldots,d\}) \,,
$$
where as usual $P^\ell_j = \kb{\varphi^\ell_j}{\varphi^\ell_j}$. Our discussion then yields the following conclusion.
\begin{proposition}\label{prop:eqopsys}
Let $\mathcal{A}$ and $\mathcal{A'}$ be two sets of selfadjoint operators. If $\mathcal{R}(\mathcal{A}) = \mathcal{R}(\mathcal{A'})$, then $\aa$ and $\aa'$ distinguish the same pairs of states.
\end{proposition}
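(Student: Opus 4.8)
The plan is to observe that the entire content of this proposition is already contained in the discussion preceding it, so the proof amounts to isolating the key equivalence and applying it. The guiding idea is that the property of \emph{failing} to distinguish a given pair $\varrho_1,\varrho_2$ is a property of the operator system $\mathcal{R}(\mathcal{A})$ alone, not of the particular generating set $\mathcal{A}$.

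First I would fix two states $\varrho_1$ and $\varrho_2$ and establish the characterization that $\mathcal{A}$ cannot distinguish them if and only if $\tr{B(\varrho_1-\varrho_2)}=0$ for every $B\in\mathcal{R}(\mathcal{A})$. The forward implication uses linearity of the trace together with the fact that $\tr{\varrho_1-\varrho_2}=0$: if $\tr{A_j(\varrho_1-\varrho_2)}=0$ for each generator $A_j$, then for any $B=\alpha_0\id+\sum_j\alpha_j A_j\in\mathcal{R}(\mathcal{A})$ one computes $\tr{B(\varrho_1-\varrho_2)}=\alpha_0\tr{\varrho_1-\varrho_2}+\sum_j\alpha_j\tr{A_j(\varrho_1-\varrho_2)}=0$. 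The reverse implication is immediate, since each $A_j$ lies in $\mathcal{R}(\mathcal{A})$, so vanishing of the trace on all of $\mathcal{R}(\mathcal{A})$ in particular forces $\tr{A_j(\varrho_1-\varrho_2)}=0$.

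The crucial point is then that the right-hand condition refers only to the set $\mathcal{R}(\mathcal{A})$ and the fixed pair $\varrho_1,\varrho_2$, and makes no reference to any choice of generators. Hence, under the hypothesis $\mathcal{R}(\mathcal{A})=\mathcal{R}(\mathcal{A}')$, the same characterization shows that $\mathcal{A}$ cannot distinguish $\varrho_1,\varrho_2$ precisely when $\mathcal{A}'$ cannot. Negating both sides, $\mathcal{A}$ distinguishes $\varrho_1,\varrho_2$ if and only if $\mathcal{A}'$ does; since $\varrho_1,\varrho_2$ were an arbitrary pair, the two sets distinguish exactly the same pairs of states.

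There is no real obstacle here: the only step requiring any care is the forward direction of the equivalence, where one must remember to use $\tr{\varrho_1-\varrho_2}=0$ to absorb the $\alpha_0\id$ term. This is precisely why the operator system $\mathcal{R}(\mathcal{A})$ is defined to include the identity rather than merely the linear span of $\mathcal{A}$, and it is the reason the conclusion is genuinely about operator systems and not just about linear spans.
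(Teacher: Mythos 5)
Your proof is correct and follows essentially the same route as the paper: the paper's argument is precisely the observation that the failure-to-distinguish condition $\tr{A_j(\varrho_1-\varrho_2)}=0$ for all $j$ is, by linearity of the trace and the fact that $\tr{\varrho_1-\varrho_2}=0$, equivalent to the vanishing of $\tr{B(\varrho_1-\varrho_2)}$ for all $B\in\mathcal{R}(\mathcal{A})$, which depends only on the operator system. Your write-up merely makes this discussion explicit, including the same key step of absorbing the $\alpha_0\id$ term.
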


Let us make use of this fact to show that in the Hilbert space $\hh = \C^2\otimes \C^2$ four product bases cannot distinguish all pure states. By product basis, we mean an orthonormal basis of $\hh$ that is constructed from two orthonormal bases of $\C^2$ by taking their tensor product. 
More precisely, if $\{\varphi_1,\varphi_2\}$ and $\{\phi_1,\phi_2\}$ are two orthonormal bases of $\C^2$, then $\{\varphi_i\otimes \phi_j \mid i,j=1,2\}$ is an orthonormal basis of $\C^4$. 
From the physics point of view this corresponds to a scheme where two parties try to determine the unknown pure state of a composite system by performing only local measurements. 

Using the Bloch representation, any $1$-dimensional projection on $\complex^2$ can be written as   $P_{\vec{n}} = \frac{1}{2}(\id + \vec{n}\cdot \vec{\sigma})$, where $\vec{n}\in\R^3$ is a unit vector. 
Furthermore,  
\begin{equation*}
\tr{P_{\vec{m}} P_{\vec{n}} } = \frac{1}{2}(1+\vec{m}\cdot\vec{n}) \, .
\end{equation*}
Therefore, the $1$-dimensional projections corresponding to an orthonormal basis of $\C^2$ are $P_{\vec{n}}$ and $P_{-\vec{n}}$ where $\vec{n}$ is fixed by the choice of the basis. 

Suppose  that we have two quadruples of (not necessarily distinct) orthonormal bases $\mathcal{B}_1',\mathcal{B}_2', \mathcal{B}_3',\mathcal{B}_4'$ and $\mathcal{B}_1'',\mathcal{B}_2'', \mathcal{B}_3'',\mathcal{B}_4''$ of $\C^2$ with the corresponding quadruples of unit vectors $\vec{m}_1,\vec{m}_2,\vec{m}_3,\vec{m}_4$ and $\vec{n}_1, \vec{n}_2,\vec{n}_3,\vec{n}_4$. 
We define four product bases of $\C^4$ via 
\begin{equation*}
\mathcal{B}_j=\{\varphi\otimes\phi \mid \varphi\in\mathcal{B}_j', \phi\in\mathcal{B}_j'' \} \, .
\end{equation*}
The four projections 
\begin{align*}
P_{\vec{m}_j} \otimes P_{\vec{n}_j}\, , \quad P_{\vec{m}_j} \otimes P_{-\vec{n}_j}\, , \quad P_{-\vec{m}_j} \otimes P_{\vec{n}_j} \, , \quad P_{-\vec{m}_j} \otimes P_{-\vec{n}_j}
\end{align*}
corresponding to an orthonormal basis $\mathcal{B}_j$ have the same linear span as the selfadjoint operators 
\begin{align*}
\id\otimes\id \, , \quad \vec{m}_j\cdot\vec{\sigma} \otimes \id \, , \quad \id\otimes \vec{n}_j\cdot\vec{\sigma} \,  ,\quad \vec{m}_j\cdot\vec{\sigma} \otimes \vec{n}_j\cdot\vec{\sigma} \, .
\end{align*}
Therefore, by introducing the set
$$
\aa = \{\vec{m}_j\cdot\vec{\sigma} \otimes \id , \, \id\otimes \vec{n}_j\cdot\vec{\sigma} , \, \vec{m}_j\cdot\vec{\sigma} \otimes \vec{n}_j\cdot\vec{\sigma} \mid j=1,\ldots,4\} \,,
$$
we have the equality
$$
\h R (\bb_1,\ldots,\bb_4) = \h R(\aa) \, .
$$
We will next show that there exist two maximally entangled pure states $\varrho_1\neq\varrho_2$ which are not distinguished by the set $\aa$, thus implying by Proposition \ref{prop:eqopsys} that the bases $\bb_1,\ldots,\bb_4$ cannot distinguish all pure states.

Recall that a unit vector $\Omega\in\C^2\otimes \C^2$ is called \emph{maximally entangled} if $\Omega = \frac{1}{\sqrt{2}} (\varphi_1 \otimes \phi_1  + \varphi_2 \otimes \phi_2)$ for some orthonormal bases $\{\varphi_1,\varphi_2\}$ and $\{\phi_1,\phi_2\}$ of $\C^2$. If $\{e_1,e_2\}$ denotes the canonical basis of $\C^2$, and $\Omega_0= \frac{1}{\sqrt{2}}(e_1\otimes e_1 + e_2\otimes e_2)$, then there always exists a unitary operator $U$ on $\C^2$ such that $\Omega= (U\otimes \id) \Omega_0$ (see, e.g., \cite[Lemma 2]{Vollbrecht2000}). Since the global phase of $\Omega$ is irrelevant for our purposes, we  see  that the maximally entangled pure states are in one-to-one correspondence with elements of the quotient group $SU(2)/\{\pm I\}$, which in turn is diffeomorphic to $SO(3)$ through mapping $U\mapsto R_U$ given by the equality
\[
U^* \vec{x}\cdot\vec{\sigma}U = R_U\vec{x}\cdot\vec{\sigma} \quad \textrm{for all $\vec{x}\in\R^3$} \,.
\]

Suppose that we try to determine the pure state $\varrho_U = \kb{\Omega}{\Omega}$ using the selfadjoint operators $\aa$. Since
\begin{align*}
\tr{(\vec{m}_j\cdot \vec{\sigma} \otimes \id)\varrho_U} & = \tr{(U^*\vec{m}_j\cdot \vec{\sigma}U \otimes \id)\kb{\Omega_0}{\Omega_0}}\\
& = \frac{1}{2}\tr{U^* \vec{m}_j\cdot \vec{\sigma}U} =0
\end{align*}
and similarly 
\begin{align*}
\tr{(\id\otimes \vec{n}_j\cdot \vec{\sigma})\varrho_U} & = \tr{(\id\otimes \vec{n}_j\cdot \vec{\sigma})\kb{\Omega_0}{\Omega_0}} = \frac{1}{2}\tr{\vec{n}_j\cdot \vec{\sigma}} =0 \, , 
\end{align*}
we find that the only relevant information that can be extracted is 
\begin{eqnarray*}
\tr{(\vec{m}_j\cdot \vec{\sigma}\otimes \vec{n}_j\cdot \vec{\sigma}) \varrho_U}  &=& \tr{(U^*\vec{m}_j\cdot \vec{\sigma}U\otimes \vec{n}_j\cdot \vec{\sigma}) \kb{\Omega_0}{\Omega_0}} \\
& = & \tr{(R_U\vec{m}_j\cdot \vec{\sigma}\otimes \vec{n}_j\cdot \vec{\sigma}) \kb{\Omega_0}{\Omega_0}} \\
&=& h_j(R_U) 
\end{eqnarray*}
where $h_j$ is defined on the linear space $M_3(\R)$ of real $3\times 3$ matrices and is given by
\begin{align*}
h_j(A) = \tr{(A\vec{m}_j\cdot \vec{\sigma}\otimes \vec{n}_j\cdot \vec{\sigma}) \kb{\Omega_0}{\Omega_0}} \, .
\end{align*}

If the selfadjoint operators $\aa$ could distinguish all pure states, then, in particular, they could distinguish all maximally entangled pure states. Therefore, the linear map $f : M_3 (\R) \to \R^4$ given by
\begin{align*}
f(A) = (h_1(A),h_2(A),h_3(A),h_4(A))^T
\end{align*}
would restrict to an injective map $\tilde{f}:SO(3)\to \R^4$. By Proposition \ref{prop:app3} in Appendix \ref{sec:appendix}, such a map would then be a smooth embedding of $SO(3)$ into $\R^4$. Since $SO(3)$ is diffeomorphic to the real projective $3$-dimensional space $RP^3$ \cite[Proposition 5.2.10]{AbrMarRat88}, and $RP^3$ cannot be embedded into $\R^4$ by \cite{Mah62,Lev63}, we then obtain a contradiction. We thus conclude that the selfadjoint operators $\aa$ cannot distinguish all pure states.

\section{Spin-1 -- It is not only about number of bases}\label{sec:spin1}

As we have now seen, it is enough to measure four bases in order to distinguish all pure states.
However, not all sets of four bases do this as was implied by our consideration of product bases in dimension 4. In this section we demonstrate this further by giving another example where  some natural candidates for the bases fail to distinguish all pure states.
We consider the problem of determining all pure states of a spin-$1$ system by measuring four orthonormal bases corresponding to different spin directions.

The Hilbert space of the spin-$1$ quantum system is $\hh=\C^3$, and the usual spin operators along the three axes are
\begin{align*}
L_x & =
\frac{1}{\sqrt{2}}\left( \begin {array}{ccc} 0&1 &0\\ 1&0&1\\ 0&1&0
\end {array} \right) \, , &
L_y & = \frac{1}{\sqrt{2}}\left( \begin {array}{ccc} 0&-i&0\\ i&0&-i\\ 0&i&0\end {array} \right) \, ,  &
L_z & =\left(\begin{array}{ccc}
1 & 0 & 0\\
0 & 0 & 0\\
0 & 0 & -1
\end{array}\right) \, .
\end{align*}
If $\vec{n}\in\R^3$ is any unit vector, the spin operator along the direction $\vec{n}$ is $\vec{n}\cdot\vec{L} = n_x L_x + n_y L_y + n_z L_z$, which is selfadjoint and has eigenvalues $\{+1,0,-1\}$.
We denote the corresponding eigenprojections as $P^{\vec{n}}_j$, and they can be written as 
\begin{align*}
P^{\vec{n}}_{+1} & = \frac{(\vec{n}\cdot\vec{L})^2 + \vec{n}\cdot\vec{L}}{2} & P^{\vec{n}}_{-1} & = \frac{(\vec{n}\cdot\vec{L})^2 - \vec{n}\cdot\vec{L}}{2} \\
P^{\vec{n}}_{0} & = \id - (\vec{n}\cdot\vec{L})^2 .
\end{align*}
Note that these projections span the same linear space as the operators $\id$, $\vec{n}\cdot\vec{L}$, and $(\vec{n}\cdot\vec{L})^2$.

The next result shows that, if the four directions $\vec{n}_1,\ldots,\vec{n}_4$ are suitably chosen, then the corresponding spin measurements actually determine all  pure states. This refines the lower bound of \cite{SiSt90}, where the authors prove that all pure states of a spin-$1$ system are uniquely determined by six spin components, and should be compared with \cite{Weigert92,FlSiCa05}, where it is shown that, for any spin-$s$ system, the spin observables along two infinitesimally near directions $\vec{n}$ and $\vec{n}'$ together with the expectation value of the spin observable orthogonal to $\vec{n}$ and $\vec{n}'$ are enough to determine all pure states up to a set of measure zero.

\begin{proposition}\label{prop:PIC-spin}
Let $\vec{n}_1,\ldots,\vec{n}_4\in\R^3$ be four directions with the components $\vec{n}_k = (n_{kx},\, n_{ky},\, n_{kz})^T$, and denote by $\bb_k$ the eigenbases of the spin operator along $\vec{n}_k$ for a spin-$1$ quantum system.
Then the bases $\bb_1,\ldots,\bb_4$ can distinguish all pure states if and only if the following conditions hold:
\begin{enumerate}[(a)]
\item the $4\times 5$ real matrix $M$ with entries
\begin{equation}\label{eq:matrix}
\begin{aligned}
M_{k,1} & = 2\sqrt{2} n_{kx} n_{kz} & \qquad
M_{k,2} & = -2\sqrt{2} n_{ky} n_{kz} \\
M_{k,3} & = n_{kx}^2 - n_{ky}^2 & \qquad M_{k,4} & = -2n_{kx}n_{ky} \\
M_{k,5} & = 3n_{kz}^2 - 1
\end{aligned}
\end{equation}
has rank $4$;
\item there exists a nonzero real solution $x = (x_1,\ldots,x_5)^T\in\R^5$ of the linear system $Mx = 0$ such that
\begin{equation}\label{eq:determ}
2 x_1 x_2 x_4 + x_3 ( x_1^2 - x_2^2 ) + x_5 ( x_1^2 + x_2^2 + x_5^2 - x_3^2 - x_4^2 ) \neq 0 .
\end{equation}
\end{enumerate}
\end{proposition}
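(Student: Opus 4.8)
The plan is to reduce the statement to the $d=3$ rank criterion of Proposition~\ref{prop:critdim3}(ii) and then to make the resulting conditions explicit in coordinates. As observed just before the proposition, the three eigenprojections of $\vec n_k\cdot\vec L$ span the same operator space as $\id$, $\vec n_k\cdot\vec L$ and $(\vec n_k\cdot\vec L)^2$, so the Hilbert--Schmidt orthogonal complement of the projections $\{P^k_j\}$ coincides with the space of traceless selfadjoint operators $T$ orthogonal to every $\vec n_k\cdot\vec L$ and every $(\vec n_k\cdot\vec L)^2$. By Proposition~\ref{prop:critdim3}(ii) the four bases distinguish all pure states if and only if this space is either $\{0\}$ or a line $\C T$ spanned by an invertible $T$. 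I would first note that the alternative $\{0\}$ cannot occur here: four quadrupole operators can never span the full $5$-dimensional space of rank-$2$ symmetric traceless tensor operators, so the complement is always at least one-dimensional. Hence the whole problem is to decide when this complement is exactly a line and its generator is invertible.

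The second step is to use the decomposition of the $8$-dimensional space of traceless selfadjoint operators on $\C^3$ into the $3$-dimensional \emph{vector} subspace $\mathrm{span}\{L_x,L_y,L_z\}$ and the complementary $5$-dimensional \emph{tensor} (quadrupole) subspace, the two being mutually Hilbert--Schmidt orthogonal as inequivalent irreducible tensor operators of the spin-$1$ representation. Writing a candidate $T$ as a sum of a vector and a tensor part, the conditions $\tr{T(\vec n_k\cdot\vec L)}=0$ constrain only the vector part, while the conditions $\tr{T(\vec n_k\cdot\vec L)^2}=0$ constrain only the tensor part. The complement therefore splits as a direct sum, and its dimension is the sum of the two codimensions. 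The columns of $M$ in \eqref{eq:matrix} are exactly the coordinates of the quadrupole operators $(\vec n_k\cdot\vec L)^2$ in a fixed basis of the tensor subspace, so condition~(a), $\rank M=4$, says precisely that these four operators are linearly independent, i.e.\ that the tensor complement is one-dimensional. I would then check that $\rank M=4$ forces $\vec n_1,\ldots,\vec n_4$ to span $\R^3$: if they lay in a plane, a rotation taking that plane to the $xy$-plane makes two columns of $M$ vanish and a third constant, whence $\rank M\le 3$. Consequently the vector part is forced to zero, and (a) is equivalent to the complement being a single line --- which, by the first paragraph, is necessary for distinguishing all pure states.

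It remains to express invertibility of the generator as condition~(b). The plan is to compute $\tr{T(\vec n\cdot\vec L)^2}$ for a general traceless selfadjoint $T$ with vanishing vector part and read it as a real linear form $\sum_{i=1}^5 M_i(\vec n)\,x_i$ in the free entries of $T$; matching coefficients against \eqref{eq:matrix} identifies the coordinates as $x_1=\Re t_{12}$, $x_2=\Im t_{12}$, $x_3=\Re t_{13}$, $x_4=\Im t_{13}$ and $x_5=t_{11}=-\tfrac12 t_{22}$, so that the orthogonality system becomes exactly $Mx=0$. Under $\rank M=4$ this pins down $T$ up to scale as
\begin{equation*}
T=\begin{pmatrix} x_5 & x_1+ix_2 & x_3+ix_4 \\ x_1-ix_2 & -2x_5 & -(x_1+ix_2) \\ x_3-ix_4 & -(x_1-ix_2) & x_5 \end{pmatrix},
\end{equation*}
with $x$ any nonzero solution of $Mx=0$.

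The main computation --- and the step I expect to be the real obstacle --- is to expand $\det T$ and verify that it equals $-2$ times the left-hand side of \eqref{eq:determ}. Since that polynomial is homogeneous of degree $3$, its vanishing is independent of the scale of $x$; hence ``some nonzero solution of $Mx=0$ satisfies \eqref{eq:determ}'' is equivalent to ``$T$ is invertible''. Combining this with the equivalence (a) $\Leftrightarrow$ (the complement is a line) and with Proposition~\ref{prop:critdim3}(ii) then yields the stated ``if and only if''. I anticipate the genuine difficulty to be bookkeeping rather than conceptual: fixing the normalisation so that the orthogonality conditions collapse exactly to $Mx=0$ with the precise entries of \eqref{eq:matrix}, and pushing the $3\times3$ determinant expansion through to land precisely on \eqref{eq:determ}.
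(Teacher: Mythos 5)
Your proposal is correct and follows essentially the same route as the paper's proof: reduction to Proposition~\ref{prop:critdim3}(ii), the observation that the orthogonal complement is always at least one-dimensional, parametrization of that complement by $\ker(M)$ (your matrix $T$ is exactly the paper's map $\Phi(x)$), and the determinant criterion for invertibility, whose expansion indeed yields $\det T = -2\bigl(2x_1x_2x_4+x_3(x_1^2-x_2^2)+x_5(x_1^2+x_2^2+x_5^2-x_3^2-x_4^2)\bigr)$. The only minor differences are your use of the vector/quadrupole irreducible-tensor decomposition where the paper introduces the auxiliary set $\{L_x,L_y,L_z,(\vec n_1\cdot\vec L)^2,\ldots,(\vec n_4\cdot\vec L)^2\}$, and your rotation argument for ``$\rank(M)=4$ implies the $\vec n_k$ span $\R^3$'' (legitimate, since $\rank(M)$ equals the dimension of the span of the quadrupole parts, which is invariant under unitary conjugation) where the paper uses an anticommutator identity.
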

\begin{proof}
We will show that conditions (a) and (b) in the previous statement are equivalent to the condition in item (ii) of Proposition \ref{prop:critdim3}.\\
The operator system $\h R(\bb_1,\ldots,\bb_4)$ is spanned by $\id$ and the set of $8$ selfadjoint operators
\begin{equation*}
\aa = \{\vec{n}_1\cdot\vec{L},\ldots, \vec{n}_4\cdot\vec{L}, (\vec{n}_1\cdot\vec{L})^2, \ldots, (\vec{n}_4\cdot\vec{L})^2\} \,.
\end{equation*}
Since the operators $\{\vec{n}_1\cdot\vec{L},\ldots, \vec{n}_4\cdot\vec{L}\}$ are linearly dependent, we have $\dim\h R(\bb_1,\ldots,\bb_4)\leq 8$, that is, $\dim\h R(\bb_1,\ldots,\bb_4)^\perp\geq 1$. Hence by Proposition \ref{prop:critdim3}.(ii) we actually need to show that conditions (a) and (b) are equivalent to
\begin{enumerate}[(a')]
\item $\dim \h R(\bb_1,\ldots,\bb_4)^\perp = 1$;
\item there exists an invertible selfadjoint operator $T\in\h R(\bb_1,\ldots,\bb_4)^\perp$.
\end{enumerate}
We begin by showing that conditions (a) and (a') are equivalent, and, when (a) holds,
\begin{equation}\label{eq:RPort}
\h R(\bb_1,\ldots,\bb_4)^\perp = \Phi(\ker(M)) ,
\end{equation}
where $\Phi:\C^5\to M_3(\C)$ is the linear map given by
$$
\Phi\left(\begin{array}{c}
x_1\\ \vdots \\ x_5
\end{array}\right) =
\left(\begin{array}{ccc}
x_5 & x_1 + ix_2 & x_3 + ix_4\\
x_1 - ix_2 & -2x_5 & -x_1 - ix_2\\
x_3 - ix_4 & -x_1 + ix_2 & x_5
\end{array}\right) \,.
$$
Note that $\Phi$ is injective, and $\Phi(\C^5) = \{\id,L_x,L_y,L_z\}^\perp$. Moreover, it is easy to check that $\tr{(\vec{n}_i\cdot\vec{L})^2 \Phi(x)} = m_i x$ for all $x\in\C^5$, where $m_i$ is the $i$th row of the matrix $M$ defined in \eqref{eq:matrix}, and $m_i x$ is the usual matrix product. Hence,
\begin{equation}\label{eq:TkerA}
\Phi(\ker(M)) = \{\id,L_x,L_y,L_z\}^\perp \cap \{(\vec{n}_1\cdot\vec{L})^2,\ldots,(\vec{n}_4\cdot\vec{L})^2\}^\perp = \h R (\aa^\prime)^\perp
\end{equation}
where
\begin{equation*}
\aa' = \{L_x,L_y,L_z,(\vec{n}_1\cdot\vec{L})^2,\ldots,(\vec{n}_4\cdot\vec{L})^2\} \,.
\end{equation*}
Now, suppose that $\rank(M)<4$. Then, $\dim\ker(M)>1$, from which it follows that $\dim\h R (\aa^\prime)^\perp > 1$ by injectivity of $\Phi$. 
Since $\h R(\bb_1,\ldots,\bb_4) = \h R (\aa)$ and $\aa\subset\aa'$, we have $\h R (\aa^\prime)^\perp \subset \h R(\bb_1,\ldots,\bb_4)^\perp$, and this implies that condition (a') does not hold.\\
Conversely, assume that $\rank(M)=4$. We claim that in this case the four unit vectors $\vec{n}_1,\ldots,\vec{n}_4$ span the whole space $\R^3$. 
Indeed, if e.g.~$\vec{n}_i = \alpha_i\vec{n}_1 + \beta_i\vec{n}_2$ for $i=3,4$, then we would have 
$$
(\vec{n}_i\cdot\vec{L})^2 = \alpha_i^2 (\vec{n}_1\cdot\vec{L})^2 + \beta_i^2 (\vec{n}_2\cdot\vec{L})^2 + \alpha_i \beta_i \left\{ \vec{n}_1\cdot\vec{L},\,\vec{n}_2\cdot\vec{L}\right\} \, , 
$$ 
where $\left\{\cdot,\cdot\right\}$ is the anticommutator. Hence, $m_i = \alpha_i^2 m_1 + \beta_i^2 m_2 + \alpha_i \beta_i u$, where $u^T\in\C^5$ (actually, $u^T\in\R^5$) is defined by 
$$
u x = \tr{\left\{\vec{n}_1\cdot\vec{L},\,\vec{n}_2\cdot\vec{L}\right\} \Phi(x)}
$$ 
for all $x\in\C^5$. Thus, $\rank(M)\leq 3$, which is a contradiction. Our claim then implies $\h R(\aa) = \h R(\aa')$. Taking the orthogonal complement, we have $\h R(\aa)^\perp = \h R(\aa')^\perp = \Phi(\ker(M))$ by \eqref{eq:TkerA}, and \eqref{eq:RPort} follows since $\h R(\bb_1,\ldots,\bb_4) = \h R(\aa)$. In particular, by injectivity of $\Phi$, $\dim\h R(\bb_1,\ldots,\bb_4)^\perp = \dim\ker(M) = 1$, that is, condition (a').\\
Finally, assuming (a), we come to the proof of (b) $\Leftrightarrow$ (b'). First of all, observe that the operator $\Phi(x)$ is selfadjoint if and only if $x\in\R^5$. By \eqref{eq:RPort}, condition (b') then amounts to
$$
\det\left(\begin{array}{ccc}
x_5 & x_1 + ix_2 & x_3 + ix_4\\
x_1 - ix_2 & -2x_5 & -x_1 - ix_2\\
x_3 - ix_4 & -x_1 + ix_2 & x_5
\end{array}\right) \neq 0
$$
for some nonzero $x=(x_1,\ldots,x_5)^T\in\R^5$ such that $Mx=0$, that is, condition (b).
\end{proof}

By Proposition \ref{prop:PIC-spin}, it is easy to check that the three orthogonal spin directions $\vec{e}_1$, $\vec{e}_2$, and $\vec{e}_3$ cannot be completed to a set of four directions which would allow unique determination of pure states. Indeed, if $\vec{n}_1 = \vec{e}_1$, $\vec{n}_2 = \vec{e}_2$, $\vec{n}_3 = \vec{e}_3$ and $\vec{n}_4$ is any direction, then
$$
\rank(M) = \rank \left(\begin{array}{ccccc}
0 & 0 & 1 & 0 & -1\\
0 & 0 & -1 & 0 & -1\\
0 & 0 & 0 & 0 & 2\\
\ast & \ast & \ast & \ast & \ast
\end{array}\right) \leq 3
$$
thus contradicting condition (a).

However, there exist sets of four spin directions which distinguish all pure states.  
For example, it is easy to check that the unit vectors
\begin{align*}
\vec{n}_1 & = (0,\,0,\,1)^T & \qquad \vec{n}_2 & = (1/\sqrt{2},\,1/\sqrt{2},\,0)^T \\
\vec{n}_3 & = (1/\sqrt{2},\,0,\,1/\sqrt{2})^T& \qquad \vec{n}_4 & = (0,\,\sqrt{3}/2,\,1/2)^T
\end{align*}
satisfy both conditions (a) and (b) of Proposition \ref{prop:PIC-spin}, hence the corresponding bases can distinguish all pure states.

Finally, we remark that the property of distinguishing all pure states is robust against small perturbations of the unit vectors $\vec{n}_1,\ldots,\vec{n}_4$. Indeed, suppose that conditions (a) and (b) of Proposition \ref{prop:PIC-spin} are satisfied by the four directions $\vec{n}_1^0,\ldots,\vec{n}_4^0$. The matrix $M=M(\vec{n}_1,\ldots,\vec{n}_4)$ defined in \eqref{eq:matrix} is a continuous function of $(\vec{n}_1,\ldots,\vec{n}_4)$, hence so are the diagonalizable matrix-valued functions $M^*M$ and $MM^*$. By condition (a), the $4\times 4$ matrix $MM^*(\vec{n}_1^0,\ldots,\vec{n}_4^0)$ is invertible, hence $MM^*$ is invertible in a neighborhood of $(\vec{n}_1^0,\ldots,\vec{n}_4^0)$, that is, $\rank(M)=4$ in that neighborhood. Thus, condition (a) still holds around $(\vec{n}_1^0,\ldots,\vec{n}_4^0)$. Now, let $x\in\R^5$ be a nonzero real solution of $M(\vec{n}_1^0,\ldots,\vec{n}_4^0)x =0$ which satisfies \eqref{eq:determ}. Moreover, let $Q$ be the orthogonal projection onto $\ker (M^*M)$, and define $\tilde{x}(\vec{n}_1,\ldots,\vec{n}_4) = Q(\vec{n}_1,\ldots,\vec{n}_4)x$. By \cite[Theorem II.5.1]{PTLO66}, $Q$ is a continuous function of $(\vec{n}_1,\ldots,\vec{n}_4)$ in a neighborhood of $(\vec{n}_1^0,\ldots,\vec{n}_4^0)$. As $M^*M$ is a real matrix, also $Q$ is real. Combining these two facts, the map $\tilde{x}$ is a nonzero continuous $\R^5$-valued function such that $\tilde{x}\in\ker (M^*M) = \ker(M)$ and $\tilde{x}(\vec{n}_1^0,\ldots,\vec{n}_4^0) = x$. By continuity, $\tilde{x}$ satisfies \eqref{eq:determ} around $(\vec{n}_1^0,\ldots,\vec{n}_4^0)$. 
Therefore, also condition (b) of Proposition \ref{prop:PIC-spin} remains fulfilled for small perturbations of $(\vec{n}_1^0,\ldots,\vec{n}_4^0)$. We conclude that, if the eigenbases of the spin operators along the directions $\vec{n}_1^0,\ldots,\vec{n}_4^0$ distinguish all pure states, then this still holds true in a neighborood of these directions.

\section{Remarks on the question in infinite dimensional Hilbert space}\label{sec:discussion}

The question of the title is  meaningful also in the case of an infinite dimensional Hilbert space. The proof of Proposition \ref{prop:2bases} works without changes also in that case, so we conclude that two orthonormal bases cannot distinguish all pure states even when $d=\infty$.
However, the construction of four orthonormal bases in Sec. \ref{sec:4bases} has no direct generalization to the infinite dimensional Hilbert space.
We are, in fact, not aware of a construction in infinite dimension that would give a finite number of bases that can distinguish all pure states.

In the infinite dimensional case it is natural to allow also measurements of continuous observables such as position $Q$ and momentum $P$. In fact, the determination of pure states from the statistics of such measurements was precisely what was addressed in the original Pauli problem. Since it is known that $Q$ and $P$ alone are not sufficient for this purpose, we can ask if this set can be suitably completed to make it able to distinguish all pure states. 
One natural attempt to obtain such a completion would be to add rotated quadratures $Q_\theta = \cos\theta\, Q + \sin\theta\, P$. 
It is known that by measuring all of the quadratures, it is possible to determine an arbitrary state, pure or mixed \cite{VoRi89}, but it is not known if a smaller subset is sufficient for pure state determination. It was recently shown that no triple $\{Q_{\theta_1}, Q_{\theta_2}, Q_{\theta_3} \}$ of quadratures is enough \cite{CaHeScTo14}. In particular, it is not sufficient to measure position, momentum, and a single additional rotated quadrature. 

For larger, but still finite, sets of rotated quadratures it is only known that if such a set can distinguish all pure states, then the choice of the angles $\theta$ is a delicate issue. In \cite{CaHeScTo14}, it was shown that if a finite set of angles $\theta_1,\ldots, \theta_n$ satisfies $\theta_i-\theta_j\in\rational \, \pi$ for all $i,j=1,\ldots, n$, then the corresponding observables are not sufficient for pure state determination. In \cite{AnJa15}, a similar result was proved in the case that $\cot\theta_j\in\rational$ for all $j=1,\ldots,n$.

\section*{Acknowledgements.}

JS and AT acknowledge financial support from the Italian Ministry of Education, University and Research (FIRB project RBFR10COAQ).

\section*{Appendix A. Linear smooth embeddings in $\R^n$}\label{sec:appendix}

If $M$ is a real differentiable manifold and $x\in M$, we denote by $T_x(M)$ the (real) tangent space of $M$ at $x$. If $N$ is another differentiable manifold and $f:M\to N$ is a differentiable map, we let $\de f_x : T_x(M) \to T_{f(x)}(N)$ be the differential of $f$ at $x$.  

{In this section we assume that  $M$ is a submanifold of a real linear space $V$, and  
we prove that, if a linear map $f:V\to\R^n$ restricts to an injective map $\tilde{f}:M\to\R^n$, then $\tilde{f}$ is a smooth embedding in the following two cases:}
\begin{enumerate}
\item $V=S_d(\C)$ is the linear space of complex $d\times d$ selfadjoint matrices and $M=\cP$ is the submanifold of pure states (see \cite[Section 4.2]{HeMaWo13});
\item $V=M_3(\R)$ is the linear space of real $3\times 3$ matrices and $M=SO(3)$ is the submanifold of orthogonal matrices with unit determinant.
\end{enumerate}

The next two results are Theorem 5 and a particular case of Lemma 1 in \cite{HeMaWo13}. Up to our knowledge, Proposition~\ref{prop:app3} below is new.

\begin{lemma}\label{prop:app1}
Let $V$, $W$ be two real linear spaces, and $M$ a compact submanifold of $V$. Suppose that $T_x(M) \subseteq \R(M-M)$ for all $x\in M$. Then, if a linear map $f:V\to W$ restricts to an injective map $\tilde{f} : M \to W$, the restriction $\tilde{f}$ is a smooth embedding of $M$ in $V$.
\end{lemma}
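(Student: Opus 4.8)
The plan is to verify in turn the three defining properties of a smooth embedding: that $\tilde{f}$ is smooth, that it is an immersion, and that it is a homeomorphism onto its image. Smoothness is immediate, since $\tilde{f}$ is the restriction to the submanifold $M$ of the linear, hence $C^\infty$, map $f$. The substance of the argument lies in the immersion property; once that is in hand, compactness of $M$ will yield the topological part with no further work.

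For the immersion step I would first identify the differential. Because $f$ is linear, under the canonical identifications $T_x(V)\cong V$ and $T_{f(x)}(W)\cong W$ its differential at every point is $f$ itself, so that $\de\tilde{f}_x = f|_{T_x(M)}$ for each $x\in M$. Thus $\tilde{f}$ is an immersion precisely when $f$ is injective on every tangent space $T_x(M)$. This is exactly where the hypothesis enters, read as the assertion that each tangent vector is a scalar multiple of a \emph{single} difference of two points of $M$. Concretely, suppose $v\in T_x(M)$ satisfies $f(v)=0$; by hypothesis $v=\lambda(m-m')$ for some $\lambda\in\R$ and $m,m'\in M$. If $v\neq 0$ then $\lambda\neq 0$ and $m\neq m'$, whence $0=f(v)=\lambda\,(f(m)-f(m'))$ forces $\tilde{f}(m)=\tilde{f}(m')$, contradicting the injectivity of $\tilde{f}$. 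Hence $v=0$, the differential $\de\tilde{f}_x$ is injective, and $\tilde{f}$ is an immersion.

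It then remains to upgrade the injective immersion to an embedding, which is purely topological. Since $\tilde{f}$ is a continuous injection from the compact space $M$ into the Hausdorff space $W$, it carries each closed subset of $M$ to a compact, hence closed, subset of $W$; a continuous closed bijection onto its image is a homeomorphism, so $\tilde{f}$ is a topological embedding. Combined with the immersion property just established, this is exactly the statement that $\tilde{f}$ is a smooth embedding of $M$, completing the proof.

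The main obstacle, and the only place the geometry of $M$ is used, is the immersion step, and specifically the correct reading of $T_x(M)\subseteq\R(M-M)$. The argument hinges on tangent vectors being genuinely \emph{proportional to chords} $m-m'$ of $M$, not merely lying in their linear span: it is this pointwise statement that converts the hypothesis on an individual tangent vector directly into a collision $\tilde{f}(m)=\tilde{f}(m')$ and hence a contradiction. I would therefore be careful, when applying the lemma to the pure states $\prem$ and to $SO(3)$, to exhibit each tangent direction explicitly as such a chord direction (for $\prem$, as a multiple of $\kb{\psi_+}{\psi_+}-\kb{\psi_-}{\psi_-}$ with $\psi_\pm=\cos\theta\,\psi\pm\sin\theta\,\phi$, and for $SO(3)$, as a multiple of $R\e^{\theta A}-R\e^{-\theta A}=2\sin\theta\,RA$), thereby confirming that the hypothesis is met in both cases of interest.
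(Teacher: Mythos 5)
Your proof is correct and takes essentially the same route as the paper's: linearity gives $\de\tilde{f}_x = f|_{T_x(M)}$, the reading of $T_x(M)\subseteq\R(M-M)$ as ``every tangent vector is a multiple of a single chord $m-m'$'' combines with injectivity of $\tilde{f}$ to give the immersion property, and compactness yields the homeomorphism onto the image. The only cosmetic difference is that you spell out the compact-to-Hausdorff closed-map argument where the paper cites a reference.
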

\begin{proof}
Since $M$ is compact and $\tilde{f}$ is continuous, injectivity implies that $\tilde{f}$ is a homeomorphism of $M$ onto $\tilde{f}(M)$ \cite[Proposition 1.6.8]{AN89}. In order to show that it is a smooth embedding, it remains to prove that $\de\tilde{f}_x$ is injective for all $x\in M$.\\
Note that by linearity $\de\tilde{f}_x = \left.f\right|_{T_x(M)}$ for all $x\in M$. Thus, if $u\in T_x(M)$ with $u=\lam(x_1-x_2)$ for some $\lam\in\R$ and $x_1,x_2\in M$, then
$$
\de\tilde{f}_x u = f(u) = \lam(f(x_1) - f(x_2)) = \lam(\tilde{f}(x_1) - \tilde{f}(x_2)) \equiv 0
$$
if and only if $\lam = 0$ or $\tilde{f}(x_1) = \tilde{f}(x_2)$, that is, $x_1 = x_2$ by injectivity of $\tilde{f}$. In both cases, we have $u=0$, hence $\de\tilde{f}_x$ is injective as claimed.
\end{proof}

\begin{proposition}\label{prop:app2}
If $\tilde{f}:\cP\to\R^n$ is injective and it is the restriction of a linear map $f:S_d(\C)\to\R^n$, then $\tilde{f}$ is a smooth embedding of $\cP$ in $\R^n$.
\end{proposition}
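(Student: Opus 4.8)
The plan is to recognize this proposition as exactly the instance of Lemma \ref{prop:app1} obtained by taking $V = S_d(\C)$, $W = \R^n$, and $M = \cP$. Since the lemma already packages the hard analytic content (that a continuous injection out of a compact manifold is a homeomorphism onto its image, and that linearity forces $\de\tilde f_x = f|_{T_x(M)}$), all that remains is to verify its two hypotheses in the case at hand: that $\cP$ is a compact submanifold of $S_d(\C)$, and that $T_\varrho(\cP) \subseteq \R(\cP - \cP)$ for every $\varrho \in \cP$. Once both are checked, the conclusion is immediate.

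The first hypothesis I would dispatch with standard facts. The set $\cP$ consists of the rank-one orthogonal projections, i.e. the operators $P$ satisfying $P = P^* = P^2$ and $\tr{P} = 1$; these equations cut out a closed subset of $S_d(\C)$, which is bounded because each such $P$ has operator norm $1$, so $\cP$ is compact. Moreover $\cP$ is a smooth submanifold of $S_d(\C)$: it is a single orbit of the conjugation action of the unitary group, diffeomorphic to the homogeneous space $U(d)/(U(1)\times U(d-1)) \cong \C P^{d-1}$, of real dimension $2(d-1)$.

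The essential step is the tangent-space inclusion. Fixing $\varrho = \kb{\psi}{\psi}$ with $\no{\psi} = 1$, I would first recall the standard description obtained by differentiating a smooth lift $t \mapsto \psi(t)$ of a curve in $\cP$: the normalization $\no{\psi(t)} = 1$ makes the diagonal terms cancel, giving $T_\varrho(\cP) = \{\, \kb{\chi}{\psi} + \kb{\psi}{\chi} \mid \chi \in \hi,\ \ip{\psi}{\chi} = 0 \,\}$. Given such a $\chi \neq 0$, I would set $r = \no{\chi}$ and $\phi = \chi/r$, absorbing the phase of $\chi$ into $\phi$, so that the tangent vector takes the real form $r(\kb{\phi}{\psi} + \kb{\psi}{\phi})$ with $\phi$ a unit vector orthogonal to $\psi$. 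The key computation is then to introduce the unit vectors $\psi_\pm = \tfrac{1}{\sqrt{2}}(\psi \pm \phi)$ and the pure states $\varrho_\pm = \kb{\psi_\pm}{\psi_\pm}$; expanding and using $\ip{\psi}{\phi} = 0$, the same-vector terms cancel in the difference and the cross terms survive, yielding $\varrho_+ - \varrho_- = \kb{\phi}{\psi} + \kb{\psi}{\phi}$. Hence the tangent vector equals $r(\varrho_+ - \varrho_-) \in \R(\cP - \cP)$, and the degenerate case $\chi = 0$ gives the zero vector, which lies trivially in $\R(\cP - \cP)$.

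With both hypotheses established, Lemma \ref{prop:app1} applies verbatim and delivers that $\tilde f$ is a smooth embedding of $\cP$ into $\R^n$. I expect the tangent-space inclusion to be the only genuinely nontrivial point, the compactness and submanifold structure being classical. One subtlety worth flagging is that $\R(\cP - \cP)$ is the cone of \emph{scaled differences} rather than a linear span, so it is important that the phase of $\chi$ can be absorbed into $\phi$; this is what lets \emph{every} tangent vector, and not merely those in some real one-parameter family, be written as a single scaled difference $r(\varrho_+ - \varrho_-)$.
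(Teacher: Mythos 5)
Your proof is correct, and its overall skeleton is the same as the paper's: both reduce the proposition to Lemma \ref{prop:app1} with $V=S_d(\C)$, $W=\R^n$, $M=\cP$, so that everything hinges on the inclusion $T_\varrho(\cP)\subseteq\R(\cP-\cP)$. Where you differ is in how that decisive step is carried out. The paper describes $\cP$ as an orbit of the adjoint $SU(d)$-action, so that $T_\varrho(\cP)=\{i[H,\varrho]\mid H\in S_d(\C)\}$, and then observes that any such commutator is selfadjoint, traceless and of rank at most $2$, whence by the spectral theorem it equals $\lam\bigl(\kb{\psi^+}{\psi^+}-\kb{\psi^-}{\psi^-}\bigr)$; your argument instead parametrizes tangent vectors as $\kb{\chi}{\psi}+\kb{\psi}{\chi}$ with $\chi\perp\psi$ by differentiating lifted curves, and then exhibits the two pure states explicitly via $\psi_\pm=\tfrac{1}{\sqrt2}(\psi\pm\phi)$. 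Your route is more elementary (no Lie-theoretic orbit description, no spectral theorem) and has the virtue of producing the distinguishing pair of pure states concretely; the paper's rank argument is slicker and reuses a fact that appears elsewhere in the paper (a traceless selfadjoint operator of rank $2$ is a scaled difference of two $1$-dimensional projections, cf.\ Section 4). One small point of precision: the normalization $\no{\psi(t)}=1$ only forces $\Re\ip{\psi}{\chi}=0$, not $\ip{\psi}{\chi}=0$; the component of $\chi$ along $i\psi$ is killed because it drops out of $\kb{\chi}{\psi}+\kb{\psi}{\chi}$ (it is pure gauge), which is why one may assume $\chi\perp\psi$. This is a wording issue, not a gap, since the resulting description of $T_\varrho(\cP)$ is correct; also, the existence of smooth local lifts (via the Hopf fibration $S^{2d-1}\to\cP$) deserves at least a citation, but is standard.
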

\begin{proof}
By Lemma \ref{prop:app1}, it suffices to prove that $T_\varrho(\cP) \subseteq \R(\cP-\cP)$ for all pure states $\varrho\in\cP$. Indeed, $\cP$ is an orbit for the adjoint action of the group $SU(d)$ on the linear space $S_d(\C)$, hence
$$
T_\varrho(\cP) = \left\{\frac{\de}{\de t} \e^{itH}\varrho\e^{-itH} \mid H\in S_d(\C)\right\} = \left\{i[H,\varrho] \mid H\in S_d(\C)\right\} \,.
$$
If $\varrho$ is a pure state and $H\in S_d(\C)$, then $i[H,\varrho]$ is a selfadjoint traceless matrix with rank at most $2$. Therefore, $i[H,\varrho] = \lam(\kb{\psi^+}{\psi^+} - \kb{\psi^-}{\psi^-})$ for some $\lam\in\R$ and unit vectors $\psi^+, \psi^-\in\C^d$, which proves the claim.
\end{proof}

\begin{proposition}\label{prop:app3}
If $\tilde{f}:SO(3)\to\R^n$ is injective and it is the restriction of a linear map $f:M_3(\R)\to\R^n$, then $\tilde{f}$ is a smooth embedding of $SO(3)$ in $\R^n$.
\end{proposition}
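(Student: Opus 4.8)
The plan is to deduce Proposition \ref{prop:app3} from Lemma \ref{prop:app1}, exactly as Proposition \ref{prop:app2} was, by taking $V = M_3(\R)$ and $M = SO(3)$. The group $SO(3)$ is a closed and bounded, hence compact, smooth submanifold of $M_3(\R)$ (it is the common zero set of the entries of $R^T R - \id$ together with the condition $\det R = 1$), so the compactness hypothesis of Lemma \ref{prop:app1} is immediate. Consequently it suffices to verify the single tangent-space inclusion $T_R(SO(3)) \subseteq \R(SO(3) - SO(3))$ for every $R \in SO(3)$; once this is in hand, any injective linear restriction $\tilde f$ is automatically a smooth embedding.

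To establish the inclusion I would first recall that the tangent space is $T_R(SO(3)) = \{RA \mid A\in M_3(\R),\ A^T = -A\}$, obtained by differentiating the curve $t \mapsto R\,\e^{tA}$ at $t=0$. Given a nonzero antisymmetric $A$, I would write $A = \theta K$, where $\theta > 0$ and $K$ is the cross-product matrix of a unit vector, so that $K^T = -K$ and $K^3 = -K$. Rodrigues' formula then gives $\e^{sK} = \id + \sin s\,K + (1-\cos s)\,K^2$, whence
\begin{equation*}
\e^{\frac{\pi}{2}K} - \e^{-\frac{\pi}{2}K} = 2K \, .
\end{equation*}
Multiplying on the left by $R$ and rescaling, I obtain
\begin{equation*}
RA = \tfrac{\theta}{2}\left( R\,\e^{\frac{\pi}{2}K} - R\,\e^{-\frac{\pi}{2}K}\right) \, ,
\end{equation*}
which exhibits the tangent vector $RA$ as a real scalar multiple of a difference of two elements of $SO(3)$ (the case $A=0$ being trivial). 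This is precisely the form required by Lemma \ref{prop:app1}, so the inclusion holds and the proof concludes by a direct appeal to that lemma.

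I expect the only genuine content to lie in the construction of the second paragraph: the reduction to Lemma \ref{prop:app1} and the compactness of $SO(3)$ are routine. The mild subtlety, as in Proposition \ref{prop:app2}, is that one must produce each tangent vector as a \emph{single} scaled difference of two points of the manifold, rather than merely as an element of the linear span of such differences; here the antisymmetric exponential, via Rodrigues' formula, delivers exactly such a representation, playing the role that the rank-two spectral decomposition of $i[H,\varrho]$ played in the pure-state case.
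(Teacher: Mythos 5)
Your proof is correct, and its overall skeleton coincides with the paper's: both reduce the statement to Lemma \ref{prop:app1} (compactness of $SO(3)$ plus the tangent-space condition), and both identify $T_R(SO(3)) = R M_3^-(\R)$, where $M_3^-(\R)$ denotes the antisymmetric matrices. The difference lies in how the key claim --- that every tangent vector is a \emph{single} scaled difference of two rotations --- is established. The paper argues non-constructively: the map $g(R) = R - R^T$ has differential $\de g_I = 2\,\mathrm{id}$, hence is a local diffeomorphism of a neighborhood $U$ of the identity onto a neighborhood of $0$ in $M_3^-(\R)$, and scaling then gives $\R g(U) = M_3^-(\R)$. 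You instead produce an explicit witness: writing $A = \theta K$ with $K$ the cross-product matrix of a unit vector and invoking Rodrigues' formula, $2K = \e^{\frac{\pi}{2}K} - \e^{-\frac{\pi}{2}K}$, so $RA = \tfrac{\theta}{2}\bigl(R\,\e^{\frac{\pi}{2}K} - R\,\e^{-\frac{\pi}{2}K}\bigr)$. In fact your representation has exactly the form the paper needs, since $\e^{-\frac{\pi}{2}K} = \bigl(\e^{\frac{\pi}{2}K}\bigr)^T$: you have exhibited concretely the rotation $R_0$ with $X=\lam(R_0-R_0^T)$ whose existence the paper deduces from the inverse function theorem. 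As for what each approach buys: your computation is elementary and fully constructive, avoiding the inverse function theorem altogether, but it is tied to dimension three (it rests on $K^3 = -K$, a property special to cross-product matrices in $\R^3$); the paper's local-diffeomorphism argument transfers verbatim to $SO(n)$ for any $n$, since $\dim SO(n) = \dim M_n^-(\R)$ and the differential of $g$ at the identity is again $2\,\mathrm{id}$ there.
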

\begin{proof}
Again, by Lemma \ref{prop:app1} it is enough to prove that $T_R(SO(3)) = \R(SO(3)-SO(3))$ for all $R\in SO(3)$.\\
Denote by $M_3^-(\R)$ the linear subspace of antisymmetric matrices in $M_3(\R)$. Then, $T_R(SO(3)) = RM_3^-(\R)$ for all $R\in SO(3)$. We claim that any $X\in M_3^-(\R)$ can be written $X = \lam(R_0-R_0^T)$ for some $R_0\in SO(3)$ and $\lam\in\R$. Indeed, the map $g : SO(3)\to M_3^-(\R)$ with $g(R) = R-R^T$ is a diffeomorphism of an open neighborhood $U$ of the identity $I$ onto a neighborhood  $g(U)$ of $0$, since its differential
$$
\de g_I (X) = \left.\frac{\de}{\de t} (\exp(tX) - \exp(-tX))\right|_{t=0} = 2X
$$
is bijective. It follows that $\R g(U) = M_3^-(\R)$, hence the claim.
\end{proof}

\end{document}